\documentclass{siamonline171218}
\PassOptionsToPackage{noabbrev}{cleveref}

\usepackage{amsmath, amssymb, amsfonts}
\usepackage{graphicx}
\usepackage{subfig}
\usepackage{float}
\usepackage{algorithm}
\usepackage{subcaption}  
\usepackage{algpseudocode}
\usepackage{fancyhdr}
\usepackage{mathrsfs}
\usepackage{tcolorbox}

\title{Effective Resistance-Based Graph Sparsification and Community Detection\
  \thanks{Submitted to the editors DATE.
    \funding{Research of JP is supported by JRF from Department of Science and Technology(DST), Government of India. Research of PB is funded by JRF from the Ministry of Human Resource Development (MHRD), Government of India.}}}

\author{Jayanta Pari, Pratibha Bhandari, Soumyendu Raha%
\thanks{Department of Computational and Data Science, Indian Institute of Science (IISc), Bangalore, India (jayantapari@iisc.ac.in, pratibhab@iisc.ac.in, raha@iisc.ac.in)}}

\usepackage{hyperref}
\begin{document}

\maketitle

\begin{abstract}
Community detection is a key task in network analysis, providing insight into the structural organization of complex systems. Effective resistance, a graph-theoretic metric derived from electrical network theory, has emerged as a powerful tool for evaluating connectivity and influence within networks. This paper proposes an effective resistance-based community detection algorithm that calculates the similarity between nodes using effective resistance values and produces a weighted graph. The sparse graph used in the algorithm is generated after computing the minimum spanning tree (MST) of the weighted graph and adopting a threshold sparsification strategy on non-MST edges. A maximum modularity approach is adopted using the Clauset-Newman-Moore algorithm on the resultant sparse graph. This algorithm is evaluated for both synthetic and real-world networks, demonstrating its effectiveness compared to popular existing methods. The result shows that the effective resistance-based approach accurately captures the structures of the community while maintaining computational efficiency.

\textbf{Key words.} Community Detection, Effective resistance, Minimum spanning tree, Modularity

\textbf{AMS subject classifications.} 05C85, 68R10, 94C15
\end{abstract}


\maketitle
\section{Introduction}
The search for communities in complex networks is a crucial problem that exists in a variety of different disciplines, such as sociology \cite{zachary1977information},\cite{girvan2002community}, biology \cite{calderer2021community},\cite{rahiminejad2019topological}, telecommunications\cite{wang2017model},\cite{ye2009telecomvis}, and physics. The concept of communities refers to subgroups of nodes that are densely connected within themselves but have only sparse connections to other groups. Communities are present in complex networks due to their modular structure. In practice, finding some modularity in a real-world network makes the job of analyzing large-scale networks easier by clustering nodes into meaningful subgroups. However, detecting communities and defining these groups is still an ill-posed problem, which is further complicated by the challenge of defining a general definition of a community, along with its persistent issue of its associated computational complexity. 

Many real-world networks typically have high edges density, and in many cases these edges carry weights to represent the connection\cite{barrat2004architecture}. Due to their dense nature, the analysis of these types of networks is computationally heavy. So some preprocessing  of data is required to solve the problem.

Graph sparsification \cite{tumminello2005tool},\cite{spielman2008graph} is a powerful approach to reduce the number of edges of a graph by maintaining certain properties of a graph. In \cite{spielman2008graph}, they proposed one sparsification algorithm based on effective resistance.

The effective resistance between two nodes of a graph has an intrinsic relation to uniform spanning trees. In particular, for any edge $e={(u,v)}$ of an undirected graph $G$, the effective resistance $R_{eff}(u,v)$ is equivalent to the probability that the edge $e$ belongs to a random spanning tree\cite{doyle1984random} of $G$. In \cite{chandra1989electrical}, it was also demonstrated that effective resistance is proportional to the commute time between the end points of edges. Furthermore,  \cite{klein2022real} showed that the square root of effective resistance of an edge is a Euclidean metric, measuring the distance between two nodes.

In this paper, we propose a new algorithm for community detection based on effective resistance. The algorithm is inspired by \cite{calderer2021community}. Here we use the Clauset-Newman-Moore algorithm \cite{clauset2004finding} for maximization of modularity and community detection. Let $G=(V,E)$ be the network, where $V$ is the vertex set and $E$ are edges. $G_{eff}$ is obtained from $G$ by evaluating the effective resistance of all edges. $G_{eff}$ has the same set of vertex, only difference is the edge weight, which are assigned based on effective resistance. However the adjacency matrix $A$ of $G_{eff}$ and $G$ are same. So to reduce the number of edges it is essential to employ a sparsification technique. By doing so, the resulting sparse network provides a clearer representation of the original network. 

Our proposed community detection algorithm leverages effective resistance for both graph sparsification and community identification.
The main contributions of our work are:
\begin{itemize}
\item
A novel sparsification technique using spanning tree that preserves community structure while reducing computational complexity.
\item
Extensive experimental evaluation on both synthetic and real-world networks demonstrating the superiority of our approach compared to state-of-the-art methods
\end{itemize}
The rest of this paper is structured as follows: Section 2 provides a review of  related work on community detection and resistance-based metrics. Section 3 outlines the theoretical foundation of effective resistance and introduces our community detection approach. Section 4 details the methodology, while Section 5 explains the experimental setup. Section 6 presents the results, Section 7 offers the comparison with existing methods. Finally, Section 8 concludes the paper with a summary of findings and potential directions for future research. 
\section{Related Work}
This section summarizes major advances in community detection algorithms and metrics based on resistance for network analysis.
\subsection{Community Detection Approaches}
The classical approaches to community discovery fall into many broad classes that consist of modularity optimization, spectral methods, statistical inference, dynamics-based techniques\cite{fortunato2010community}, and Ricci curvature-based method\cite{ni2019community}.
Modularity, introduced by Newman and Girvan (2004), is among the most widely used quality functions for evaluating community structures. Modularity is defined as the proportion of edges within communities against what is found in an analogous random network with the same degree distribution. Various algorithms have been proposed to maximize the modularity, such as the fast greedy algorithm \cite{clauset2004finding}, the Louvain algorithm\cite{blondel2008fast}, and several evolutionary strategies\cite{pizzuti2017evolutionary}.

Spectral techniques apply the eigenvectors of matrices associated with a graph, for example, the Laplacian matrix, to divide network into communities\cite{von2007tutorial}. They are well-based mathematically and are able to effectively find global community patterns.

Statistical inference methods frame community detection as the problem of fitting generative models to empirical network data. Prominent examples of this strategy include stochastic block models\cite{karrer2011stochastic} and their numerous extensions.

Dynamic-based methods mirror phenomena like stochastic movement or information diffusion in networks to outline communities. Some of the most commonly known ones are Infomap\cite{rosvall2008maps}, which is based on the minimum description length paradigm to characterize random walks, and Label Propagation\cite{raghavan2007near}, based on label diffusion through the network.

Recently, the Ricci flow-based method \cite{ni2019community} was introduced as a classical geometric approach. It is used to decompose the smooth manifold. In \cite{ni2019community}, they considered the network as a geometric object and the communities in a network as a geometric decomposition to break down the communities within the network.

Subsequently, Pizzuti and Socievole in \cite{pizzuti2021effective}, introduced a genetic algorithm for community detection that uses effective resistance to compute of node similarity. The method exhibits promising performance on both synthetic and real network data.

The following section reviews the mathematical preliminaries of the effective resistance metric and explains how it is utilized to identify meaningful community structures.

\section{Mathematical preliminary of Effective Resistance}
Here we will discuss some mathematical preliminaries of effective resistance and followed by a detailed explanation of the proposed approach.
\subsection{Effective Resistance and its Properties}
The effective resistance $R(u,v)$ between two nodes $u,v \in V$ in a graph is the voltage difference between the nodes when a current of one ampere is injected into the source node $u$ and extracted from the sink node $v$. 

An important characteristic of the effective resistance \( R_{uv} \) between nodes \( u \) and \( v \) is that it is \( \textit{bounded above} \) by the shortest-path distance in a graph~\cite{van2023graph}. This makes it different from shortest-path measure, as it takes into account the cumulative contributions of all possible paths between the nodes. Consequently, effective resistance provides a more comprehensive characterization of connectivity that captures all the topological structure of the network.
Moreover, effective resistance is strongly correlated with the \textit{commute-time distance} \( T_{uv} \), which is the expected number of steps of a random walk from node \( u \) to node \( v \) and back to \( u \). This can be written as:

\begin{equation}
T_{uv} = \mathbf{1}^T\mathbf{A} \mathbf{1} \cdot R_{uv}, 
\end{equation}

where \( \mathbf{1} \) is the all-ones vector, \( \mathbf{A} \) is the (weighted) adjacency matrix of the graph. The scalar \( \mathbf{1}^T \mathbf{A} \mathbf{1} \) is twice the total sum of edge weights of the graph. This expression points out that commute-time distance grows linearly with the effective resistance between nodes~\cite{chandra1989electrical}.\\
For a graph $G(V,E)$ we know that the relationship between node voltage and edge current is 
\begin{equation}
    L\mathcal{V}=I_{uv}(e_{u}-e_{v})
\end{equation}
The equation came from Kirchhoff's current law. Here $L$ is the Laplacian matrix of the graph and $\mathcal{V}$ is the node voltage and $I_{uv}$ is the edge current and for vertex $u\in V$, the vector $e_{u}$ is the standard basis vector of $R^n$ with value one on the
position associated to vertex $u$. From the above equation we will get $\mathcal{V}_{u}$ and $\mathcal{V}_{v}$ which will give the effective resistance $R_{uv}=\frac{\mathcal{V}_{u}-\mathcal{V}_{v}}{I_{uv}}$.\\
\begin{theorem}
For a graph $G(V,E)$ with edge weights $w_{uv}$ for edge $(u, v)\in E$, define the resistances on each edge $(u, v)\in E$ as $r_{uv} = {w}_{uv}^{-1}$. Then the effective resistance between the vertices $u$ and $v$ is\cite{vos2016methods}
\begin{equation*}
    R_{uv} = (e_{u} - e_{v})^T L^+ (e_{u} - e_{v}) = L^+_{uu}-2L^+_{uv}-L^+_{vv}.
\end{equation*}
\end{theorem}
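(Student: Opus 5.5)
We take $G$ connected (otherwise $R_{uv}$ is infinite or not defined for $u,v$ in different components, and one works componentwise). Throughout we read the Kirchhoff relation (2) with the weighted Laplacian $L = D - A$, where $A$ holds the conductances $w_{uv} = r_{uv}^{-1}$. The plan is to solve (2) for the node voltages $\mathcal{V}$ using the Moore--Penrose pseudoinverse $L^+$. The voltage difference $\mathcal{V}_u-\mathcal{V}_v$ is then a quadratic form in $e_u-e_v$, and dividing by the injected current $I_{uv}$ gives the first expression for $R_{uv}$. The entrywise formula follows by expanding that quadratic form.

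\textbf{Spectral facts about $L$.} $L$ is real symmetric and positive semidefinite, since $x^TLx=\sum_{(a,b)\in E} w_{ab}(x_a-x_b)^2$. For connected $G$ this sum vanishes only when $x$ is constant, so $\ker L=\operatorname{span}\{\mathbf 1\}$. Write the spectral decomposition $L=\sum_{i\ge 2}\lambda_i \phi_i\phi_i^T$ with $\lambda_i>0$ and $\phi_i\perp \mathbf 1$. Then $L^+=\sum_{i\ge2}\lambda_i^{-1}\phi_i\phi_i^T$. Two consequences follow: $L^+$ is symmetric, and $LL^+=L^+L=I-\frac1n\mathbf 1\mathbf 1^T$ is the orthogonal projector onto $\mathbf 1^\perp$.

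\textbf{Solving (2).} The right-hand side $b=I_{uv}(e_u-e_v)$ satisfies $\mathbf 1^Tb=0$, so $b\in\mathbf 1^\perp=\operatorname{range}(L)$ and (2) is solvable. By the projector identity, $L(L^+b)=b$, so $\mathcal{V}^\ast=L^+b$ is a solution. Since $\ker L=\operatorname{span}\{\mathbf 1\}$, every solution has the form $\mathcal V=L^+b+c\mathbf 1$ for some constant $c$. The constant does not affect differences, so
\[
\mathcal V_u-\mathcal V_v=(e_u-e_v)^T\mathcal V=I_{uv}\,(e_u-e_v)^TL^+(e_u-e_v).
\]
Dividing by $I_{uv}$, as in the definition $R_{uv}=(\mathcal V_u-\mathcal V_v)/I_{uv}$, gives the first formula.

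\textbf{Entrywise form.} Expand the quadratic form using $e_a^TL^+e_b=L^+_{ab}$ and the symmetry $L^+_{uv}=L^+_{vu}$. This gives $L^+_{uu}-2L^+_{uv}+L^+_{vv}$. The last term therefore carries a plus sign, and the minus sign before $L^+_{vv}$ in the displayed statement should be read as a typographical slip.

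The only step needing real care is the justification that $L^+b$ actually solves (2). This rests on $b\perp\mathbf 1$ together with the identification of $LL^+$ as the projector onto $\mathbf 1^\perp$, and both come from the spectral decomposition, which uses connectivity of $G$.
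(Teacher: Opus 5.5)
Your proof is correct and follows the same route as the paper: you solve the Kirchhoff system $L\mathcal{V}=I_{uv}(e_u-e_v)$ with the pseudoinverse and divide $(e_u-e_v)^T\mathcal{V}$ by $I_{uv}$, and you also justify two steps the paper skips, namely solvability (since $\mathbf{1}^T(e_u-e_v)=0$) and the cancellation of the $c\mathbf{1}$ ambiguity in $\mathcal{V}$. You are also right that the entrywise form, which you get by direct expansion rather than the paper's detour through $L^+=VD^+V^T$, is $L^+_{uu}-2L^+_{uv}+L^+_{vv}$, so the minus sign on $L^+_{vv}$ in the statement is a typo.
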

\textbf{Proof.} 
We know that $L\mathcal{V}=I_{uv}(e_{u}-e_{v})$ from this one can obtain the expression of $\mathcal{V}$ as 
\begin{equation*}
    \mathcal{V}=I_{uv}L^+(e_{u}-e_{v})
\end{equation*}
Here $L^+$ is the pseudoinverse of $L$
We have expression of $R_{uv}=\frac{\mathcal{V}_{u}-\mathcal{V}_{v}}{I_{uv}}=(e_{u} -e_{v})^T \frac{\mathcal{V}}{I_{uv}}$
putting the value of $\frac{\mathcal{V}}{I_{uv}}$ we will get the desired result.
The equivalent equation will come after doing the eigendecomposition of $L^+$,
$L^+ = VD^+V^T$, where $V$ is the eigenvector, now 
\begin{equation*}
    R_{uv}= (e_{u} - e_{v})^T VD^+V^T(e_{u} - e_{v})
\end{equation*}
this will give the desired expression.\\
\begin{theorem}
The eﬀective resistance matrix $R$ is a Euclidean distance matrix\cite{arpita2008}.
\end{theorem}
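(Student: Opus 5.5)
The plan is to exhibit explicit points $x_1,\dots,x_n$ in a Euclidean space such that $R_{uv}=\|x_u-x_v\|^2$ for all vertex pairs. This is the standard meaning of a Euclidean distance matrix: its entries are squared Euclidean distances between points. The starting point is the previous theorem, which gives
\begin{equation*}
R_{uv}=(e_u-e_v)^T L^+ (e_u-e_v).
\end{equation*}
(The correct expansion is $L^+_{uu}-2L^+_{uv}+L^+_{vv}$, since $L^+$ is symmetric.) The whole argument then rests on showing that $L^+$ is symmetric positive semidefinite and factoring it.

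First I record the properties of $L$. It is symmetric, and $x^TLx=\sum_{(u,v)\in E} w_{uv}(x_u-x_v)^2\ge 0$ because $w_{uv}>0$, so $L$ is positive semidefinite. Using the eigendecomposition $L=VDV^T$ already invoked in the previous proof, we get $L^+=VD^+V^T$. Here $D^+$ inverts the nonzero eigenvalues and keeps the zeros, so $D^+$ has nonnegative diagonal and $L^+$ is symmetric positive semidefinite as well. Next I take the square root $(L^+)^{1/2}=V(D^+)^{1/2}V^T$ and define $x_u=(L^+)^{1/2}e_u\in\mathbb{R}^n$. Then
\begin{equation*}
\|x_u-x_v\|^2=(e_u-e_v)^T (L^+)^{1/2}(L^+)^{1/2}(e_u-e_v)=R_{uv},
\end{equation*}
which is exactly the Euclidean distance matrix property. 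It also implies that $\sqrt{R_{uv}}$ is a metric, matching the remark from \cite{klein2022real} in the introduction.

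As an alternative, or as a check, one could use Schoenberg's criterion. That route sets $J=I-\frac1n\mathbf{1}\mathbf{1}^T$ and verifies that $-\frac12 JRJ$ is positive semidefinite. Writing $R=\mathrm{diag}(L^+)\mathbf{1}^T+\mathbf{1}\,\mathrm{diag}(L^+)^T-2L^+$, the two rank-one terms are annihilated by $J$, so $-\frac12 JRJ=JL^+J$. For a connected graph $L^+\mathbf{1}=0$, so this equals $L^+$, which is positive semidefinite. This also confirms that the embedding above is centered.

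The main obstacle is justifying that $L^+$ is positive semidefinite and that the potential representation from the previous theorem is valid. This requires $e_u-e_v$ to lie in the range of $L$, that is, to be orthogonal to the kernel. For a connected graph the kernel is spanned by $\mathbf{1}$, and $\mathbf{1}^T(e_u-e_v)=0$. If the graph is disconnected, I would restrict to pairs in the same component, where $R_{uv}$ is finite, and apply the argument componentwise.
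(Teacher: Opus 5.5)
Your proposal is correct and follows essentially the same route as the paper, which likewise takes the points to be the columns of $(L^+)^{1/2}$ (your $x_u=(L^+)^{1/2}e_u$) and also checks that $R$ is negative semidefinite on $\mathbf{1}^\perp$ via $p^TRp=-2p^TL^+p$, which is equivalent to your Schoenberg computation $-\frac12 JRJ=L^+$. Your version is slightly more careful than the paper's: you justify that $L^+$ is positive semidefinite, so the real square root exists and $p^TL^+p\ge 0$; you use the correct expansion $L^+_{uu}-2L^+_{uv}+L^+_{vv}$; and you note that connectivity is needed for $R$ to be finite.
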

\textbf{Proof.}
A symmetric matrix $S\in R^{n\times n}$ is called a Euclidean distance matrix if there exists a set of $n$ vectors $p_{1},p_{2}...p_{n}\in R^n$ such that $S_{ij}={\parallel p_{i} - p_{j} \parallel}^2$. One result of Euclidean distance matrix is that a matrix $S$ is a euclidean distance matrix if and only if it has nonnegative entries, zero diagonal and is negative semidefinite on $1^T$ i.e $p^TSp \leq 0$ with the condition $1^Tp=0$\cite{gower1985properties}.
Now as it is known that effective resistance values are always nonnegative and also the effective resistance between the same vertex is $0$, so the first two properties of Euclidean distance is satisfied for $R$ matrix.\\
let us take one vector $p$ such that $1^Tp=0$, now 
\begin{equation*}
    p^TRp=p^T(1diag(L^+)^T + diag(L^+)1^T -2L^+)p
\end{equation*}
\begin{equation*}
    =-2p^TL^+p\leq 0
\end{equation*}
So resistance matrix satisfies all the properties of Euclidean distance matrix. Here $diag(L^+)$ is the vector consisting of the diagonal elements of $L^+$.\\
One thing can be shown that we can construct a set of vectors $p_{1},p_{2}...p_{n}\in R^n$ such that  $R_{ij}={\parallel p_{i} - p_{j} \parallel}^2$\\
Let $p_{i}$ be the columns of $(L^+)^\frac{1}{2}$. The Gram matrix constructed from these vectors corresponds to $L^+$
\begin{equation*}
    p_{i}^Tp_{j}=L^+_{ij}
\end{equation*}
Now Expanding ${\parallel p_{i} - p_{j} \parallel}^2$ we will get 
\begin{equation*}
    {\parallel p_{i} - p_{j} \parallel}^2=p_{i}^Tp_{i} + p_{j}^Tp_{j}-2p_{i}^Tp_{j}
\end{equation*}
\begin{equation*}
    =L^+_{ii} +L^+_{jj} -2L^+_{ij}
\end{equation*}
which is $R_{ij}$.\\
Hence, the resistance matrix $R$ is an Euclidean distance matrix, so the square root of its entry will be the Euclidean distance, and it will also satisfy the triangle inequality.
\begin{equation*}
    R^\frac{1}{2}_{ij}\leq  R^\frac{1}{2}_{ik} +  R^\frac{1}{2}_{kj}
\end{equation*}.
\begin{theorem}
For a graph $G$, the total effective resistance\cite{arpita2008} is given by
\begin{equation*}
    R_{total}=n\sum_{1}^{n}\frac{1}{\lambda_{i}}
\end{equation*}
where ($\lambda_{i}\neq 0$) are eigenvalues of the graph Laplacian matrix $L$.
\end{theorem}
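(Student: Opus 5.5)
Here $R_{total}$ is the sum of effective resistances over all unordered pairs, $R_{total}=\sum_{i<j}R_{ij}=\tfrac12\,\mathbf{1}^T R\mathbf{1}$. I will also assume $G$ is connected, so that $\lambda_1=0$ is simple and the nonzero eigenvalues are $\lambda_2,\dots,\lambda_n$. The plan is to combine the pairwise formula of the first theorem of this section with the matrix identity used in the proof that $R$ is a Euclidean distance matrix.

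\emph{Step 1: rewrite $R$ through $L^+$.} By the first theorem, $R_{ij}=L^+_{ii}+L^+_{jj}-2L^+_{ij}$. In matrix form this is
\[
R=\mathbf{1}\,diag(L^+)^T+diag(L^+)\,\mathbf{1}^T-2L^+ ,
\]
exactly as in the preceding proof.

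\emph{Step 2: sum all entries.} Sandwiching between all-ones vectors gives
\[
\mathbf{1}^TR\mathbf{1}=n\,\mathrm{tr}(L^+)+n\,\mathrm{tr}(L^+)-2\,\mathbf{1}^TL^+\mathbf{1}.
\]
The last term vanishes because $L\mathbf{1}=0$. Since $L^+=VD^+V^T$ shares eigenvectors with $L$, this gives $L^+\mathbf{1}=0$. Therefore
\[
R_{total}=\tfrac12\,\mathbf{1}^TR\mathbf{1}=n\,\mathrm{tr}(L^+).
\]

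\emph{Step 3: evaluate the trace spectrally.} Using the eigendecomposition $L^+=VD^+V^T$ from the first proof, $\mathrm{tr}(L^+)=\mathrm{tr}(D^+)=\sum_{\lambda_i\neq0}1/\lambda_i$. This yields the claimed formula.

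The only step needing care is the bookkeeping in Step 2. One must confirm that $R_{total}$ counts each unordered pair exactly once; if it is read as the full double sum $\sum_{i,j}R_{ij}$, the constant becomes $2n$ instead of $n$. One must also justify $L^+\mathbf{1}=0$, which uses connectivity, or more generally the fact that $\mathbf{1}$ lies in the kernel of $L$. If $G$ is disconnected, the effective resistance between components is infinite, so the connectivity assumption should be stated explicitly.
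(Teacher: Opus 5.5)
Your proposal is correct and follows essentially the same route as the paper: substitute $R_{pq}=L^+_{pp}+L^+_{qq}-2L^+_{pq}$ into $\tfrac12\sum_{p,q}R_{pq}$ to get $n\,\mathrm{tr}(L^+)-\mathbf{1}^TL^+\mathbf{1}$, drop the second term via $L^+\mathbf{1}=0$, and evaluate $\mathrm{tr}(L^+)=\mathrm{tr}(D^+)$ from $L^+=VD^+V^T$. One small refinement is that $L^+\mathbf{1}=0$ holds for every graph because $\ker L^+=\ker L$, so connectivity is not needed for that step; it is needed only so that the pseudoinverse formula gives the true effective resistances.
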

\textbf{Proof.}
We know that for a graph $G$, the total effective resistance is 
\begin{equation*}
    R_{total}=\sum_{p=1}^{n-1}\sum_{q=p+1}^{n} R_{pq} = \frac{1}{2}\sum_{p=1}^{n}\sum_{q=1}^{n} R_{pq},
\end{equation*}
putting the value of $R_{pq}$ in above equation we get the expression
\begin{equation*}
    R_{total}=\frac{1}{2}\sum_{p=1}^{n}\sum_{q=1}^{n} (L^+_{pp}-2L^+_{pq} + L^+_{qq}), 
\end{equation*}
which can be written as 
\begin{equation*}
    R_{total}= n\sum_{i=1}^{n} L^+_{pp} - 1^TL^+1.
\end{equation*}
Suppose \{$l_{1}...l_{n}$\} are the orthonormal basis of eigenvectors of the graph Laplacian matrix $L$. Then the pseudo-inverse can be defined as
\begin{equation*}
    L^+l_{1}=0 ,\quad
    L^+l_{i}= \frac{1}{\lambda_{i}}l_{i},\quad \text{for}\quad i\neq 1.
\end{equation*}
It can be interpreted as $L^+1=0$ and for all $x\perp 1,\quad L^+x=y$ with $y\perp 1$
Now the expression for total resistance will be 
\begin{equation*}
    R_{total}= n Tr(L^+).
\end{equation*}
$L^+ = VD^+V^T$ can be written after eigen decomposition. We also know that a similar matrix have the same trace value, so
\begin{equation*}
    R_{total}=nTr(L^+)=nTr(D^+)=n\sum_{i=1}^{n}\frac{1}{\lambda_{i}}.
\end{equation*}
\\
There is a relationship between the determinants of submatrix of the Laplacian matrix $L$ and the effective resistance $R$.\\
\begin{theorem}
Let $G(V,E)$ be a connected, weighted graph with  $n\geq 3$ vertices, where each edge $(u,v)\in E$ has a positive weight $w_{uv}\in \mathbb{R}^+$. The resistance of an edge $e=(u,v)\in E$ is defined as $r_{e}=\frac{1}{w_{uv}}$. For any pair of vertices $u,v\in V$, the effective resistance between them is given by\cite{vos2016methods}
\begin{equation*}
    R_{uv}=\frac{det(L(u,v))}{det(L(u))}.
\end{equation*}
Here $L$ represents the graph Laplacian, and $L(u,v)$ and $L(u)$ denote the Laplacian matrix with rows and columns corresponding to vertices $u$ and $v$.
\end{theorem}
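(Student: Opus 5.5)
We read $L(u)$ as the principal submatrix of $L$ with the row and column of $u$ deleted, and $L(u,v)$ as the principal submatrix with the rows and columns of both $u$ and $v$ deleted. The plan is to avoid the pseudoinverse and instead ground the network at $u$. Since the current problem $L\mathcal{V}=I_{uv}(e_u-e_v)$ determines $\mathcal{V}$ only up to an additive constant, we may fix $\mathcal{V}_u=0$. Take $I_{uv}=1$ and inject the unit current at $v$ rather than $u$, so that $\mathcal{V}_v = R_{uv}$ and the sign comes out positive.

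Deleting the equation and the unknown belonging to $u$ leaves the reduced system
\begin{equation*}
L(u)\,\tilde{\mathcal{V}} = \tilde e_v ,
\end{equation*}
where $\tilde{\mathcal{V}}$ and $\tilde e_v$ are $\mathcal{V}$ and $e_v$ with the $u$-entry removed. The dropped column multiplies $\mathcal{V}_u=0$, so it contributes nothing. The dropped row is automatically satisfied, because the columns of $L$ sum to zero.

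The key step is to show that $L(u)$ is invertible. By the matrix-tree theorem, $\det L(u)$ equals the weighted spanning tree count $\sum_{T}\prod_{e\in T}w_e$, which is positive because $G$ is connected and all weights are positive. Alternatively, $L(u)$ is positive definite: $x^TL(u)x=\sum_{(a,b)\in E} w_{ab}(x_a-x_b)^2$ with $x_u=0$, and this vanishes only when $x$ is constant on the connected graph, hence only when $x=0$.

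Cramer's rule then gives
\begin{equation*}
R_{uv}=\mathcal{V}_v=\bigl(L(u)^{-1}\bigr)_{vv}=\frac{\operatorname{adj}(L(u))_{vv}}{\det L(u)}.
\end{equation*}
The diagonal cofactor of $L(u)$ at position $v$ is the determinant of $L(u)$ with the row and column of $v$ also removed, which is exactly $\det L(u,v)$. Its sign is $+1$ because it is a diagonal cofactor. This yields the stated formula. The hypothesis $n\ge 3$ ensures that $L(u,v)$ is a nonempty matrix; for $n=2$ the convention $\det(\emptyset)=1$ still gives $R_{uv}=1/w_{uv}$.

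The main obstacle is reconciling this grounded solution with the paper's earlier pseudoinverse formula $R_{uv}=(e_u-e_v)^TL^+(e_u-e_v)$. Any two solutions of $L\mathcal{V}=e_v-e_u$ differ by a multiple of $\mathbf{1}$, since the kernel of $L$ is spanned by $\mathbf{1}$ when $G$ is connected. The voltage difference $\mathcal{V}_v-\mathcal{V}_u$ is therefore independent of the chosen solution, and the grounded solution and $L^+(e_v-e_u)$ give the same $R_{uv}$. As a by-product, this argument also yields the combinatorial reading $R_{uv}=\kappa(G/uv)/\kappa(G)$ via the matrix-tree theorem, where $\kappa$ denotes the weighted spanning tree count.
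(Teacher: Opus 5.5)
Your proof is correct for $u\neq v$, which the statement tacitly assumes. Grounding $u$, noting that the discarded equation follows from the others since $\mathbf{1}^TL=0$ and $\mathbf{1}^T(e_v-e_u)=0$, showing $L(u)$ is nonsingular, and reading $\mathcal{V}_v=(L(u)^{-1})_{vv}$ as a diagonal cofactor over $\det L(u)$ is a complete derivation.

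The paper does not prove this theorem at all. It quotes it from \cite{vos2016methods} and only uses it, chaining it with the Matrix-Tree theorem (Theorem 5) and the contraction identity $T(G/uv)=\det(L(u,v))$ (Theorem 6) to state $R_{uv}=T(G/uv)/T(G)$ (Theorem 7). So your argument supplies what the paper leaves to the reference rather than paralleling anything in the text.

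It also adds two things the paper lacks. Your kernel argument explicitly links the determinant formula to the pseudoinverse formula of Theorem 1. Your positive-definiteness proof of $\det L(u)>0$ makes the result independent of the Matrix-Tree theorem, so the spanning-tree reading genuinely comes afterwards as a consequence.

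For that by-product, you leave one step implicit. The Laplacian of $G/uv$, with the merged vertex deleted, is exactly $L(u,v)$, because parallel edges into the merged vertex add their weights and the $u$--$v$ edge becomes a discarded loop, leaving every remaining diagonal and off-diagonal entry unchanged. Granting that, the weighted Matrix-Tree theorem applied to $L(u)$ and $L(u,v)$ yields Theorem 7.
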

As from the above theorem, we can see certain relationship between resistance distance and determinant of Laplacian. So there must be a relationship between spanning tree and resistance distance. For an unweighted graph $G(V,E)$, the number of spanning trees is given by $det(L(u))$ for any $u\in V$, as stated by the Matrix-Tree Theorem. Moreover, it is shown that $det(L(u,v)$ corresponds to the number of spanning trees in $G$ including the edge $(u,v)$.
\begin{theorem}
For a graph $G(V,E)$, the number of spanning tree in $G$ is equal to $det(L(u))$ for any $u\in V$ \cite{vos2016methods},
\begin{equation*}
    T(G)= \det (L(u)).
\end{equation*}
\end{theorem}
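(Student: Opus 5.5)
The natural route is the Cauchy--Binet formula applied to a factorization of the Laplacian. Fix an arbitrary orientation of the edges of $G$ and let $B\in\mathbb{R}^{n\times m}$ be the signed vertex--edge incidence matrix: the column of edge $e=(a,b)$ is $e_a-e_b$. Then $L=BB^T$; in the weighted case one writes $L=BWB^T$ and absorbs $W^{1/2}$ into $B$, but for the statement as given, counting spanning trees, we take the unweighted graph with $W=I$. Deleting the row of $B$ indexed by $u$ gives $B_u\in\mathbb{R}^{(n-1)\times m}$, and deleting row and column $u$ of $L$ gives
\begin{equation*}
L(u)=B_uB_u^T .
\end{equation*}
Cauchy--Binet then expands
\begin{equation*}
\det(L(u))=\sum_{S\subseteq E,\ |S|=n-1}\det(B_u[S])^2,
\end{equation*}
where $B_u[S]$ is the square submatrix of $B_u$ formed by the columns indexed by $S$.

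The proof then reduces to one combinatorial lemma: for $|S|=n-1$, $\det(B_u[S])\in\{0,\pm1\}$, and it is nonzero exactly when the edge set $S$ is a spanning tree of $G$. Each spanning tree therefore contributes exactly $1$ to the sum, and every other $S$ contributes $0$, which gives $T(G)=\det(L(u))$. The independence from the choice of $u$ is then automatic. As a consistency check, it also matches the determinant-ratio theorem for $R_{uv}$ stated just above.

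I would prove the lemma in two cases.

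\textbf{Case 1: $S$ is not a spanning tree.} Since $|S|=n-1$, the subgraph $(V,S)$ is then disconnected. Let $C$ be a connected component of $(V,S)$ not containing $u$. Summing the rows of $B_u[S]$ indexed by vertices of $C$ gives zero in every column: an edge inside $C$ contributes $+1$ and $-1$, and an edge outside $C$ contributes nothing. So the rows are linearly dependent and the determinant is $0$.

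\textbf{Case 2: $S$ is a spanning tree.} Induct on $n$. A tree with at least two vertices has at least two leaves, so there is a leaf $x\neq u$. The row of $x$ in $B_u[S]$ has a single nonzero entry $\pm1$, located in the column of its pendant edge. Expanding the determinant along that row reduces it to $\pm\det$ of the corresponding matrix for the tree $S$ with the leaf $x$ removed, on $n-1$ vertices, still with $u$ deleted. The base case $n=1$ is the empty determinant, equal to $1$ (or one can start at $n=2$, where the matrix is $(\pm1)$).

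The main obstacle is this lemma, specifically choosing the leaf to be different from $u$ so that the induction stays within reduced incidence matrices; the rest is bookkeeping. I would also state Cauchy--Binet explicitly as an assumed standard fact, since the paper has not stated it.
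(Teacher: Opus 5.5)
The paper does not prove this theorem; it quotes the Matrix-Tree theorem from \cite{vos2016methods} and uses it, so there is no internal argument to compare yours against.

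Your Cauchy--Binet proof is the standard one, and it is correct and complete.
\begin{itemize}
\item The factorization $L(u)=B_uB_u^T$ and the expansion $\det L(u)=\sum_{|S|=n-1}\det(B_u[S])^2$ are right. This includes the degenerate case $m<n-1$, where the sum is empty and $G$ has no spanning tree.
\item In Case~1, choosing a component $C$ with $u\notin C$ is exactly what guarantees that every row indexed by $C$ survives in $B_u$, so the row dependence is genuine.
\item In Case~2, choosing the leaf $x\neq u$ keeps the minor a reduced incidence matrix of a smaller tree, so the induction closes.
\item You correctly read $L(u)$ as $L$ with row and column $u$ \emph{deleted}, which the paper's wording leaves ambiguous.
\end{itemize}

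Your route also gives more than the citation does, and this is worth stating explicitly. Replacing $B$ by $BW^{1/2}$ in the same computation gives $\det L(u)=\sum_{T}\prod_{e\in T}w_e$, the weighted spanning-tree count. That is the version the paper actually relies on. The determinant-ratio formula for $R_{uv}$ and the contraction identity $T(G/uv)=\det L(u,v)$ are stated for weighted graphs with $r_e=1/w_e$, and there ``number of spanning trees'' only makes sense as this weighted sum. Your proof covers that case at no extra cost.

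Your remark about consistency with the determinant-ratio formula carries no logical weight toward the tree count and can be dropped.
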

This is the \textbf{Matrix-Tree theorem}. Another theorem is 
\begin{theorem}For a weighted graph $G(V, E)$ with Laplacian $L$, the number of spanning trees in the graph resulting from contracting the edge between vertices $u$ and $v$ is same as the determinant of the modified Laplacian matrix, $\text{det}(L(u, v))$\cite{vos2016methods}. It means
\begin{equation*}
    T(G/uv)=\det(L(u,v)).
\end{equation*}
\end{theorem}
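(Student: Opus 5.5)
The idea is to realize the contraction $G/uv$ concretely and then apply the Matrix-Tree theorem stated above to it. Form $G/uv$ by identifying $u$ and $v$ into a single vertex $w$. Every edge $(u,x)$ or $(v,x)$ with $x\neq u,v$ becomes an edge $(w,x)$ of the same weight, with parallel edges kept as a multigraph or merged by adding weights. Any edge $(u,v)$ itself becomes a loop and is deleted. Loops never lie in a spanning tree, so deleting them does not change $T(G/uv)$.

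The plan has three steps.

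\textbf{Step 1: the Laplacian of the contraction.} I would write down $L_{G/uv}$ explicitly. Order the vertices as $w$ followed by $V\setminus\{u,v\}$. Rows and columns indexed by $x,y\neq u,v$ are unchanged from $L$: each such vertex keeps its total incident weight, and its edges to $u$ and $v$ now all go to $w$. Hence the principal submatrix of $L_{G/uv}$ obtained by deleting the row and column of $w$ is exactly the principal submatrix of $L$ obtained by deleting the rows and columns of $u$ and $v$, i.e.\ $L(u,v)$. The one point to check is the diagonal entry of each $x$. Its original degree counted the edges to $u$ and to $v$ separately, and these are now edges to $w$ with the same total weight, so the entry is unchanged.

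\textbf{Step 2: apply the Matrix-Tree theorem.} The Matrix-Tree theorem above says the (weighted) spanning tree count of a graph equals any principal cofactor of its Laplacian. Applied to $G/uv$ with the deleted vertex chosen to be $w$, it gives
\begin{equation*}
T(G/uv)=\det\bigl(L_{G/uv}(w)\bigr)=\det(L(u,v)).
\end{equation*}
In the weighted case, $T$ denotes the weighted count $\sum_{T}\prod_{e\in T}w_e$. The Matrix-Tree theorem must then be used in its weighted form, which follows from the Cauchy--Binet argument with the weighted incidence matrix $B W^{1/2}$.

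\textbf{Main obstacle.} The only delicate point is that the Matrix-Tree theorem must be valid for multigraphs, because contraction can create parallel edges. I would handle this with the incidence-matrix factorization $L=BWB^T$, which is insensitive to parallel edges. Alternatively, merging parallel edges by summing their weights preserves both the Laplacian and the weighted spanning tree count, since a spanning tree chooses at most one edge from each parallel class and the choices contribute additively. This also resolves the degenerate case in which $G/uv$ is a single vertex: there $L(u,v)$ is empty with determinant $1$, and $T=1$.
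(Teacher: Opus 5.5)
Your proof is correct. The paper gives no argument of its own for this theorem; it only cites \cite{vos2016methods}, so there is no competing route to compare against. What you wrote is the standard derivation, and it is what the paper's ordering implicitly relies on, since the Matrix-Tree theorem is stated immediately before: deleting the row and column of the merged vertex $w$ from $L_{G/uv}$ gives exactly $L(u,v)$, and the Matrix-Tree theorem applied to $G/uv$ at $w$ finishes. Your explicit version also makes visible two conventions the paper leaves unstated. First, $G/uv$ must keep its parallel edges (or merge them with summed weights) rather than be simplified: for the unweighted triangle on $u,v,x$ one has $\det(L(u,v))=2$, while the simplified contraction has only one spanning tree. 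Second, the Matrix-Tree theorem must be invoked in its weighted, multigraph form, whereas the paper only discusses it for unweighted graphs.
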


The resultant theorem by the combination of the theorem $4$, $5$ and $6$ is given as,
\begin{theorem}
Let $G(V,E)$ be the graph and $u,v\in V$, the effective resistace of the edge $(u,v)$ is\cite{vos2016methods} 
\begin{equation*}
    R_{uv}= \frac{T(G/uv)}{T(G)}
\end{equation*}
\end{theorem}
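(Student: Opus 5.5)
The plan is to chain the three preceding results, as the sentence before the statement suggests. By the determinant formula for effective resistance stated earlier,
\begin{equation*}
    R_{uv}=\frac{\det(L(u,v))}{\det(L(u))}.
\end{equation*}
I would rewrite the numerator and the denominator separately as spanning-tree counts, working throughout with the weighted counts $T(\cdot)$, that is, sums over spanning trees of the product of edge weights. For a weighted graph this is the sense in which both the Matrix-Tree theorem and the contraction theorem have to be read.

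\textbf{Step 1 (denominator).} Apply the Matrix-Tree theorem with the deleted vertex chosen to be $u$. This gives $\det(L(u))=T(G)$. The choice of $u$ is legitimate because the theorem holds for any vertex, and $T(G)>0$ since $G$ is connected with positive weights. So the quotient is well defined.

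\textbf{Step 2 (numerator).} Invoke the contraction theorem to get $\det(L(u,v))=T(G/uv)$. Here I would check that the paper's $L(u,v)$ is the Laplacian with \emph{both} rows and columns $u$ and $v$ removed. This matrix equals the reduced Laplacian of $G/uv$ with the merged vertex deleted. Multiple edges created by the contraction are kept with their weights, and loops are dropped, since loops affect neither the Laplacian nor any spanning tree. Applying the Matrix-Tree theorem to $G/uv$ then gives exactly $T(G/uv)$. This check also shows that the statement holds for every pair $u,v$, not only for adjacent vertices: contracting a non-edge simply means identifying the two vertices.

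\textbf{Step 3.} Substitute the results of Steps 1 and 2 into the determinant formula to obtain $R_{uv}=T(G/uv)/T(G)$.

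The main obstacle is the identification in Step 2. One has to confirm that deleting rows and columns $u,v$ from $L$ gives precisely the reduced Laplacian of the contracted multigraph. The diagonal entries of the other vertices are unchanged, and edges to $u$ or $v$ now go to the merged vertex, which has been deleted. If more justification is wanted, there is a second route. Spanning trees of $G/uv$ correspond bijectively to spanning forests of $G$ with exactly two components, one containing $u$ and the other $v$. Adding any edge $e=(u,v)$ to such a forest yields a spanning tree of $G$ that contains $e$. So when $(u,v)\in E$, the identity matches the classical fact that $w_{uv}R_{uv}$ is the probability that $e$ lies in a weighted random spanning tree.
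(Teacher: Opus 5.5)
Your proposal is correct and follows the paper's route: the paper obtains this result by combining the determinant formula $R_{uv}=\det(L(u,v))/\det(L(u))$ with the Matrix-Tree theorem for the denominator and the contraction identity $T(G/uv)=\det(L(u,v))$ for the numerator. Your added check that deleting rows and columns $u,v$ from $L$ gives the reduced Laplacian of the contracted multigraph, together with your reading of $T$ as weighted tree counts, fills in details the paper leaves implicit.
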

As per Interlacing Theorem\cite{van2023graph,so1999rank}, the effective resistance between any two nodes of a network decreases or remains the same with a rise in the weights of the edges. It is known as Rayleigh's monotonicity law.
\begin{theorem}
The effective resistance of a graph decreases strictly when new edges are added or when the weights of existing edges are increased.
\end{theorem}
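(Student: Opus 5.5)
We will argue through the Laplacian representation of Theorem 1, $R_{uv}=(e_u-e_v)^T L^+(e_u-e_v)$, together with Theorem 3 for the total resistance. Assume $G$ is connected, which is needed for $R$ to be finite. Both operations in the statement are rank-one updates of $L$. Adding an edge $e=(a,b)$ with weight $w>0$ changes $L$ to $L' = L + w\, b_e b_e^T$, where $b_e=e_a-e_b$. Increasing the weight of an existing edge $e$ by $\delta>0$ is the same formula with $w=\delta$. So it suffices to treat one update $L'=L+w\,b_eb_e^T$ and then iterate.

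\textbf{Step 1 (Sherman--Morrison for the pseudoinverse).} Since $G$ is connected, $L$ and $L'$ share the kernel $\mathrm{span}(\mathbf 1)$, and $b_e\perp \mathbf 1$. Restricted to $\mathbf 1^\perp$, $L$ is invertible, so the Sherman--Morrison formula applies there:
\begin{equation*}
L'^{+} = L^+ - \frac{w\, L^+ b_e b_e^T L^+}{1+w\, b_e^T L^+ b_e}.
\end{equation*}
We verify this by multiplying by $L'$ on $\mathbf 1^\perp$ and using $LL^+ = I-\tfrac1n\mathbf 1\mathbf 1^T$. The denominator is $1+wR_{ab}>0$.

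\textbf{Step 2 (pairwise monotonicity).} Sandwich the formula with $b_{uv}=e_u-e_v$:
\begin{equation*}
R'_{uv}=R_{uv}-\frac{w\,(b_{uv}^TL^+b_e)^2}{1+wR_{ab}}\le R_{uv}.
\end{equation*}
This is Rayleigh's law in the non-strict form. The quantity $b_{uv}^TL^+b_e$ is the voltage drop across $e$ when a unit current is sent from $u$ to $v$. Hence the pairwise decrease is strict exactly when that drop is nonzero.

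\textbf{Step 3 (strictness for the total resistance).} By Theorem 3 in the form $R_{total}=n\,\mathrm{Tr}(L^+)$,
\begin{equation*}
R'_{total}=R_{total}-\frac{n\,w\,\|L^+b_e\|^2}{1+wR_{ab}}.
\end{equation*}
Here $L^+b_e\neq 0$, because $b_e$ is a nonzero vector in $\mathrm{range}(L)=\mathbf 1^\perp$. So the total effective resistance decreases strictly. Equivalently, every nonzero eigenvalue of $L'$ dominates the corresponding eigenvalue of $L$ (interlacing), and the trace of $L'-L$ is $2w>0$, so at least one eigenvalue strictly increases.

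\textbf{Main obstacle.} The difficulty is the word ``strictly'' as applied to individual pairs. It cannot hold for every pair $u,v$. For example, in a symmetric graph, adding an edge between two vertices that sit at equal potential for the $u$--$v$ flow leaves $R_{uv}$ unchanged. We will therefore prove the statement in two parts. First, the pairwise effective resistance is nonincreasing, with strict decrease if and only if $b_{uv}^TL^+b_e\neq 0$. This holds in particular for $\{u,v\}=\{a,b\}$, where the decrease equals $wR_{ab}^2/(1+wR_{ab})>0$. Second, the effective resistance of the graph as a whole, $R_{total}$, decreases strictly. The other technical point is justifying Sherman--Morrison for the pseudoinverse. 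This is handled by working on the invariant subspace $\mathbf 1^\perp$, which relies on the connectedness assumption.
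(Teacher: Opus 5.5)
Your proof is correct, and it goes further than the paper does. The paper gives no proof of this theorem. It only says, just before the statement, that by the cited interlacing theorem the pairwise effective resistance ``decreases or remains the same'' (Rayleigh's monotonicity law).

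The argument implicit there is spectral. Since $L'=L+w\,b_eb_e^T$ is a positive semidefinite rank-one update, the eigenvalues interlace, $\lambda_i(L)\le\lambda_i(L')$. Combined with Theorem 3, $R_{total}=n\sum_i 1/\lambda_i$, this shows the total resistance cannot increase. Strictness then follows from $\mathrm{Tr}(L'-L)=2w>0$ together with the zero eigenvalue remaining zero by connectedness. This is exactly the alternative you sketch at the end of Step 3.

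That route is shorter for the total resistance, but it only controls spectral quantities. Eigenvalue interlacing by itself does not give $b_{uv}^TL'^+b_{uv}\le b_{uv}^TL^+b_{uv}$. The pairwise law needs the Loewner-order fact $L'\succeq L\Rightarrow L'^+\preceq L^+$ on $\mathbf 1^\perp$.

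Your Sherman--Morrison route handles both levels at once. It gives the exact decrements $w(b_{uv}^TL^+b_e)^2/(1+wR_{ab})$ and $n w\|L^+b_e\|^2/(1+wR_{ab})$, and it pins down precisely when a pairwise decrease is strict.

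Your reading of ``strictly'' is also the right one. Your equal-potential example, e.g.\ the chord $xy$ in the 4-cycle $u,x,v,y$, where $R_{uv}=1$ before and after, shows that strictness fails pairwise. This matches the paper's own phrase ``decreases or remains the same.'' So the theorem as stated is true only for the total effective resistance, or for pairs with a nonzero voltage drop across the new edge.
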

Next we will discuss about the methodology\\
\section{Methodology}
Here we will discuss our method \textbf{ERSCD}(effective resistance based sparsification and community detection).
The algorithm is partitioned into $3$ parts. First part involves efficient computation of the effective resistance of the network, and the second part focuses to sparsify the graph based on similarity matrix and the last part performs community detection.\\
\paragraph{\textbf{Efficient effective resistance calculation:}}
In Theorem $1$ we have shown one compact formula for effective resistance calculation. Here we have to calculate the pseudoinverse of the laplacian matrix.
Also from equation $2$ we can see a system of linear equation for unit current is,
\begin{equation}
    L\mathcal{V} = (e_{u} - e_{v}).
\end{equation}
The matrix $L$ is a sparse symmetric positive semidefinite matrix. 
To solve this sparse system we use the Algebraic Multigrid (AMG) solver\cite{ruge1987algebraic}, an efficient iterative solver to solve large sparse linear system. For graphs with $n$ nodes and $m$ edges, AMG solves Laplacian systems in nearly linear time $\mathcal{O}(n+m)$, so if we solve this for edge the total run time will be $m \mathcal{O}(m+n)$.

But the main issue is that for calculating the effective resistance of each pair the linear solver will take $\mathcal{O}(m+n)$ time. But in case of theorem $1$ the pseudoinverse calculation is one time and then to calculate the effective resistance of each pair it will take $\mathcal{O}(1)$ time, that's why the theorem $1$ is suitable for exact effective resistance calculation. 

We have shown the time complexity of different methods on SBM graph with fixed mixing parameter $\mu=0.2$ and number of nodes ranging from $100$ to $900$ in the graphs below.
\begin{figure}
    \centering
    \includegraphics[width=0.6\textwidth]{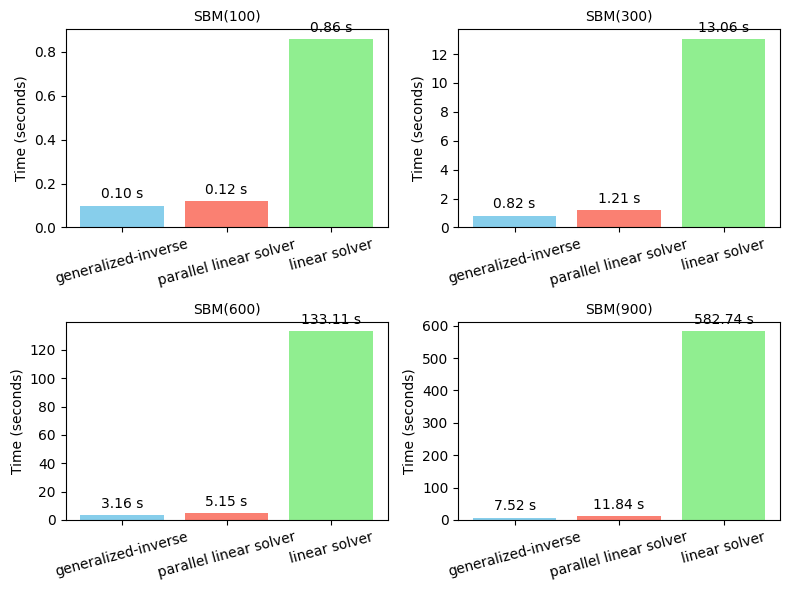}
    \caption{Time taken by different methods to compute pseudo inverse}
\end{figure}
Otherwise to compute the approximate effective resistance in $\mathcal{O}(m \log n)$ time one can use Spielman and Srivastava\cite{spielman2008graph} algorithm. Here the effective resistance between two node $u,v$ is 
\begin{equation*}
    R(u,v)=\parallel w^\frac{1}{2}ML^+(e_{u}-e_{v}) \parallel^2_{2}.
\end{equation*}
Here the incidence matrix is given by $M$ and $w$ is edge weight. The pairwise distance between vectors $(w^\frac{1}{2}ML^+e_{u})u\in V$ is the effective resistance. They have used Johnson-Lindenstrauss Lemma and project the vectors onto an $\mathcal{O}(\log n)$-dimensional random subspace. We have tested this approach to compute effective resistance but we did not get satisfactory result.\\
\paragraph{\textbf{Graph sparsification:}}
After computing the effective resistance of the network, we calculate the Gaussian function of the resistance distance and assign it as the weight of edges  of the network.
\begin{equation*}
    w_{uv}=exp(\frac{-R_{uv}^2}{2\sigma})
\end{equation*}
Here $\sigma$ is a parameter (considered 1 for our experiments). As edge weights are changed, the effective resistance of the network will also change (theorem $8$). So we need to calculate the effective resistance of the modified graph, which is $R^{modified}$. 

Also from theorem $6$, we know that the square root of effective resistance defines Euclidean distance. To make $R^{modified}$, the Euclidean distance we took the Hadamard product with adjacency matrix and then took its square root
\begin{equation*}
    R^{modified}_{H}=\sqrt{R^{modified}\circ A}.
\end{equation*}
Next, we normalize the values of the effective resistance and make it symmetric,
\begin{equation*}
    R^{modified}_{H}=\frac{1}{2}(\frac{R^{modified}_{H}}{max(R^{modified}_{H})}+\frac{(R^{modified}_{H})^T}{max(R^{modified}_{H})}).
\end{equation*}
To create the similarity matrix, we will use the following normalized effective resistance,
\begin{equation*}
    S=\mathcal{J}-R^{modified}_{H}
\end{equation*}
where $\mathcal{J}$ is the matrix with all entries $1$.

Next, minimum spanning tree is computed using prims algorithm with effective resistance as weight and separate the MST and non-MST edges.

Next, sort the non-MST edges by the ascending order of similarity matrix, because to remove the non similar edges which are not necessary.

Next, we give $pe$, percentage of edges to delete from the non-MST edges, and after deleting the edges, the sparsified graph is built with weight as similarity values.\\
\paragraph{\textbf{Community Detection:}}
Now to detect community of this sparsified graph, we use the Clauset algorithm\cite{clauset2004finding}. It is a greedy optimization technique. The algorithm runs in $\mathcal{O}\left(m d \log n\right)$, where $m$ is the number of edges and $n$ is the number of nodes and $d$ is the depth of
the “dendrogram” describing the network’s community. For sparse networks, $m\approx n$ and $d\approx \log(n)$. For such network the algorithm run time is $\mathcal{O}\left(n log^2(n) \right)$. Before this algorithm, greedy based algorithm \cite{newman2004finding} with the run time of $\mathcal{O}(n^2)$ has been proposed. But in the algorithm\cite{clauset2004finding} Clauset et.al. have used more sophisticated data structures to make it faster. 
Before describing the algorithm, we need to describe about modularity on which the algorithm is based.
\paragraph{\textbf{Modularity}}
Modularity\cite{newman2004finding} measures clustering quality by comparing the density of intra-community edge to that of a random graph. It takes values ranging from \([-1, 1)\), with higher values indicating stronger community structure. Modularity does not require ground truth labels and is computationally efficient for large networks. However, it may suffer from \textit{resolution limit}\cite{sun2014analysis}.

For a network \( G \) with \( n \) nodes, \( m \) edges, and adjacency matrix \( A \), modularity \( \mathcal{Q} \) is defined as:

\begin{equation}
\mathcal{Q} = \frac{1}{2m} \sum_{u,v} \left( A_{uv} - \frac{d_u d_v}{2m} \right) \delta(c_u, c_v)
= \sum_{i=1}^{c} \left( e_{ii} - a_i^2 \right)
\end{equation}

where:
\begin{itemize}
    \item \( d_x \): degree of node \( x \),
    \item \( \delta(c_u, c_v) = 1 \): if nodes \( u \) and \( v \) are in the same community, 0 otherwise,
    \item \( e_{ij} \): fraction of edges between communities \( i \) and \( j \),
    \[
    e_{ij} = \frac{1}{2m} \sum_{u \in C_i} \sum_{v \in C_j} A_{uv}
    \]
    \item \( a_i \): fraction of edges connected to nodes in community \( i \), given by:
    \[
    a_i = \frac{d_i}{2m}
    \]
    \item 
    $m=\frac{1}{2}\sum_{uv}A_{uv}$: total number of edges of the graph.
\end{itemize}
In ERSCD(our method) we are using weighted adjacency matrix,
\begin{equation*}
    A_{uv} = 
\begin{cases}
w_{uv}, & \text{if } u \text{ and } v \text{ are connected with weight w}  \\
0, & \text{otherwise}.
\end{cases}.
\end{equation*}
Clauset's algorithm \cite{clauset2004finding} finds the changes in $\mathcal{Q}$; it considers every possible pair of communities, and for each pair, merges them and calculates how modularity $\mathcal{Q}$ changes, which gives $\Delta \mathcal{Q}_{ij}$ for each possible merge. Merging the communities which increases the $\mathcal{Q}$ most, the above process is repeated until no merge can improve the modularity values.\\
The algorithm begins with every node as a member of community one. The algorithm starts with
\begin{equation*}
        \Delta \mathcal{Q}_{ij} = 
\begin{cases}
\frac{1}{2m} - \frac{d_{i}d_{j}}{(2m)^2}, & \text{ if } $i$ \text{ and } $j$ \text{ are connected}, \\
0, & \text{otherwise}.
\end{cases}
\end{equation*} and 
 \begin{equation*}
     a_{i}=\frac{d_{i}}{2m}
 \end{equation*}
for each $i$.
Their algorithm follows three step process\\
step 1: Initially compute the values of $\Delta \mathcal{Q}_{ij}$ and $a_{i}$ using the above equations. Populate max-heap with the largest value from each row of the $\Delta \mathcal{Q}$ matrix.\\
step 2:
Choose the maximum value of $\Delta \mathcal{Q}_{ij}$ from the max-heap $H$, which contain the largest element of each row of the matrix $\Delta \mathcal{Q}_{ij}$ with the community labels $i$, $j$. Merge the corresponding communities, then update  $\Delta \mathcal{Q}$ matrix,  heap $H$, and value of $a_{i}$ accordingly. Finally, increment $\mathcal{Q}$ by $\Delta \mathcal{Q}_{ij}$.\\
step 3: 
Repeat step 2 until only one community remains.\\
The update rules for $\mathcal{Q}$ are: \\
if a community $k$ is connected to both $i$ and $j$:
\begin{equation*}
    \Delta \mathcal{Q'}_{jk}=\Delta \mathcal{Q}_{ik} + \Delta \mathcal{Q}_{jk}
\end{equation*}
if community $k$ is connected to $i$ but not $j$:
\begin{equation*}
    \Delta \mathcal{Q'}_{jk}=\Delta \mathcal{Q}_{ik} - 2a_{j}a_{k}
\end{equation*}
if community $k$ is connected to $j$ but not $i$:
\begin{equation*}
    \Delta \mathcal{Q'}_{jk}=\Delta \mathcal{Q}_{jk} - 2a_{i}a_{k}
\end{equation*}
Because every merge operation requires $\mathcal{O}((|i| + |j|) \log n)$ time, and the depth of the resultant dendrogram is $d$ i.e., any path in the dendrogram will have at most $d$ nodes. Given the total degree sum is $2m$, the overall runtime is bounded by $\mathcal{O}(m d \log n)$. For further description of the algorithm, one can refer\cite{clauset2004finding}.\\
Now we will discuss the ERSCD algorithm. The algorithm has $3$ steps, effective resistance calculation, followed by sparsification and community detection.\\
\renewcommand{\thealgorithm}{} 



\begin{algorithm}
\caption{ERSCD Algorithm}
\label{alg:erscd}
\textbf{Input:} Graph $G = (V, E)$ and percentage of edge removal $pe$ \\
\textbf{Output:} Detected Community, obtained by:
\begin{enumerate}
  \item Calculate the effective resistance of the input graph $G$
  \item Sparsify the graph for a given $pe$
  \item Give this sparsified graph as input to Clauset’s algorithm for community detection
\end{enumerate}
\end{algorithm}
Let's break down the run time of the proposed algorithm \\
For computing the effective resistance, it will take $\mathcal{O}(1)$ time and computing the pseudo inverse of the Laplacian will take $\mathcal{O}(n^w)$, where $w\leq3$; here pseudo inverse calculation is one time. For sparsification, we are generating the minimum spanning tree, taking $\mathcal{O}(m \log n)$ tme  and run time for community detection by modularity maximization is $\mathcal{O}(m \log^2 n)$. So the total run time of the algo is $\mathcal{O}(n^w + m \log n + m \log^2 n)$. Here $m$ and $n$ denotes graph's edges and node counts, respectively.\\
There exist several algorithms for community detection, of which, some take \textbf{too much computation time} but give \textbf{good accuracy} and others are \textbf{are faster} but  give \textbf{less accuracy} for various datasets. The  present algorithm gives a trade-off and \textbf{balances both time and accuracy},  

Next section we discuss about our experimental results and compare the results with different existing algorithms for community detection.
\section{Experimental Setup}
To validate ERSCD we perform many different experiments in Python. We have done the experiments both in synthetic and real world network datasets. All experiments were conducted on a machine running {Ubuntu 24.04.2 LTS (64-bit)} with an {AMD EPYC\texttrademark~7662} processor.
\paragraph{\textbf{Experimental Data:}}
We use both synthetic and real-world datasets, with synthetic data generated from SBM (Stochastic Block Model) and LFR (Lancichinetti–Fortunato–Radicchi) models.\\
\paragraph{Stocastic Block Model:}
The Stocastic Block Model(SBM)\cite{abbe2018community} is a probabilistic model for generating graph with community structure where $n$ nodes are divided into $k$ separate communities. 
Nodes within the same community are more likely to be connected with  high probability $p_{intra}$. while node-to-node interactions between nodes of different communities occur with lower probability $p_{inter}$. \\
\paragraph{Lancichinetti-Fortunato-Radicch:} The Lancichinetti–Fortunato–Radicchi (LFR)\cite{lancichinetti2008benchmark} benchmark is a popular generative model for assessing community detection algorithms. It produces undirected, unweighted, non-overlapping community graphs whose node degrees and community sizes follow power-law distribution.\\

We are using five real world network data for experiments. First we will describe about the data sets. The data sets are taken from\cite{ni2019community}.
\paragraph{Zachary Karate Club:}
This undirected graph represents club members as nodes, with edges indicating connections between them. It is commonly used in community detection to identify the two group of people.\\
\paragraph{American college football network:}
This undirected network represents football teams as nodes and games between them as edges. The teams are divided into 12 conferences, with intra-conference games occurring more frequently than inter-conference games. 
\paragraph{Political books network:}
The Political Books network is composed of books on US politics for sale on Amazon.com during the 2004 presidential election. Nodes are books, and edges are between the books that were purchased together regularly by the same users.\\
\paragraph{Political blogs network:}
The Political Blogs network was collected around the 2004 U.S. Presidential Election. Nodes represent political blogs, labeled as either liberal or conservative. An edge connects two blogs if one blog cited the other.\\
\paragraph{Cora network:}
This network data is a citation network dataset. Here each node represents a scientific publication and edges denote the citation relationship between papers. The data contains research papers on machine learning, and each paper is classified into a particular topic category. These categories are used as ground truth labels for clustering.\\
Next we will discuss about the Evaluation metric we are using to evaluate the quality of solutions.
\paragraph{\textbf{Evaluation Metric:}}
To evaluate the quality of solutions, we are using various metrics.\\
\paragraph{Adjusted Rand Index (ARI):}
The \textit{Adjusted Rand Index (ARI)}~\cite{santos2009use} evaluates the similarity between two clustering results by adjusting the \textit{Rand Index}, ensuring that random clustering yields a score of 0. \\
Let the ground truth partition of nodes \( \{1, 2, \dots, n\} \) be denoted by disjoint sets \( A_1, A_2, \dots, A_m \), and the predicted clustering by disjoint sets \( B_1, B_2, \dots, B_k \). 
The ARI between the two clustering \( A \) and \( B \) is defined as:
\begin{equation*}
\begin{split}
&\text{ARI}(A, B) =\\ &\frac{
\sum_{i=1}^{m} \sum_{j=1}^{k} \binom{|A_i \cap B_j|}{2} - 
\frac{
\sum_{i=1}^{m} \binom{|A_i|}{2} \sum_{j=1}^{k} \binom{|B_j|}{2}
}{\binom{n}{2}}
}{
\frac{1}{2} \left[
\sum_{i=1}^{m} \binom{|A_i|}{2} + \sum_{j=1}^{k} \binom{|B_j|}{2}
\right] - 
\frac{
\sum_{i=1}^{m} \binom{|A_i|}{2} \sum_{j=1}^{k} \binom{|B_j|}{2}
}{\binom{n}{2}}
}
\end{split}
\end{equation*}
ARI equals 1 when the clustering perfectly matches the ground truth and is less than 1 otherwise.\\
With this we also calculated the Normalized Mutual Information(NMI), which also evaluates the quality of solutions.
\paragraph{Normalized Mutual Information(NMI):}
Let \( \mathcal{A} \) and \( \mathcal{B} \) be two partitions of a network. Define:
\begin{itemize}
    \item \( N_{ij} \): number of nodes in community \( i \) of partition \( \mathcal{A} \) and community \( j \) of partition \( \mathcal{B} \),
    \item \( N_{i\cdot} = \sum_j N_{ij} \): number of nodes in community \( i \) of partition \( \mathcal{A} \),
    \item \( N_{\cdot j} = \sum_i N_{ij} \): number of nodes in community \( j \) of partition \( \mathcal{B} \),
    \item \( n = \sum_{i,j} N_{ij} \): total number of nodes.
\end{itemize}
The Normalized Mutual Information (NMI) is defined as:
\begin{equation*}
\begin{aligned}
\text{NMI}(\mathcal{A}, \mathcal{B}) =
\frac{
-2 \sum\limits_{i=1}^{c_{\mathcal{A}}} \sum\limits_{j=1}^{c_{\mathcal{B}}} N_{ij} \log \left( \frac{N_{ij} \cdot n}{N_{i\cdot} \cdot N_{\cdot j}} \right)
}{
\sum\limits_{i=1}^{c_{\mathcal{A}}} N_{i\cdot} \log \left( \frac{N_{i\cdot}}{n} \right)
+
\sum\limits_{j=1}^{c_{\mathcal{B}}} N_{\cdot j} \log \left( \frac{N_{\cdot j}}{n} \right)}
\end{aligned}
\end{equation*}
if $\mathcal{A} = \mathcal{B}$ then $NMI(\mathcal{A},\mathcal{B})=1$, otherwise less than $1$
\section{Results}
We have experimented with the algorithm on both synthetic and real-world datasets. In both cases, we measure the ARI and NMI values to see the algorithm's effectiveness.  We have also seen that for the $pe$ value between $0$ to $0.5$ (means $0$ percentage to $50$ percentage) we got the optimal result. We picked the $pe$ values with spacing $0.05$.\\
For synthetic data we have used SBM and LFR network data. We have tested our algorithm for different mixing parameter values $\mu$. Here $\mu$ value ranges from $0.1$ to $0.6$. \\
For SBM network data, we have taken $250$ nodes. For LFR network also we have taken $250$ nodes.\\
\begin{table}
    \centering
    \begin{tabular}{cc}
         Number of nodes:  & 250 \\
         Average degree of nodes:  &5 \\
        Power-law exponent for generating the degree distribution:   &3 \\
        Exponent for power law creating community sizes:   &1.5 \\
        Mixing parameter $\mu$: &  [0.1;0.6]\\
      Minimal number of community:  & 10 \\
        Maximal number of community:   & 50 \\
    \end{tabular}
    \caption{LFR-250 parameter setting}
\end{table}

\begin{figure}[H]
    \centering
    \includegraphics[width=0.4\textwidth]{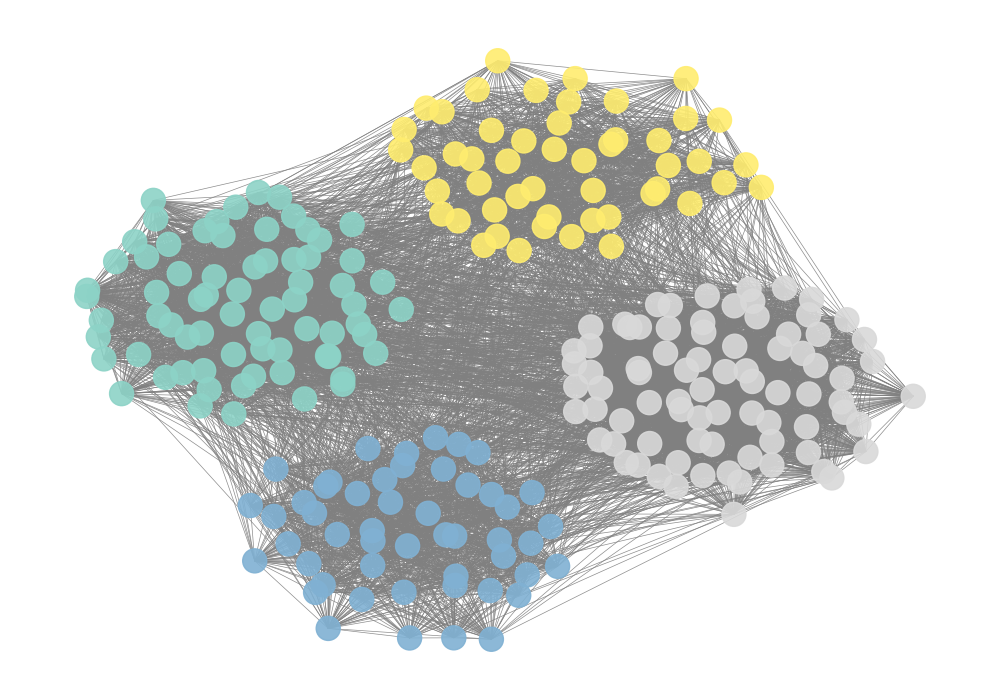}
    \caption{SBM network before community detection}
\end{figure}
\begin{figure}[H]
    \centering
      \includegraphics[width=0.4\textwidth, trim=0 10 0 0, clip]{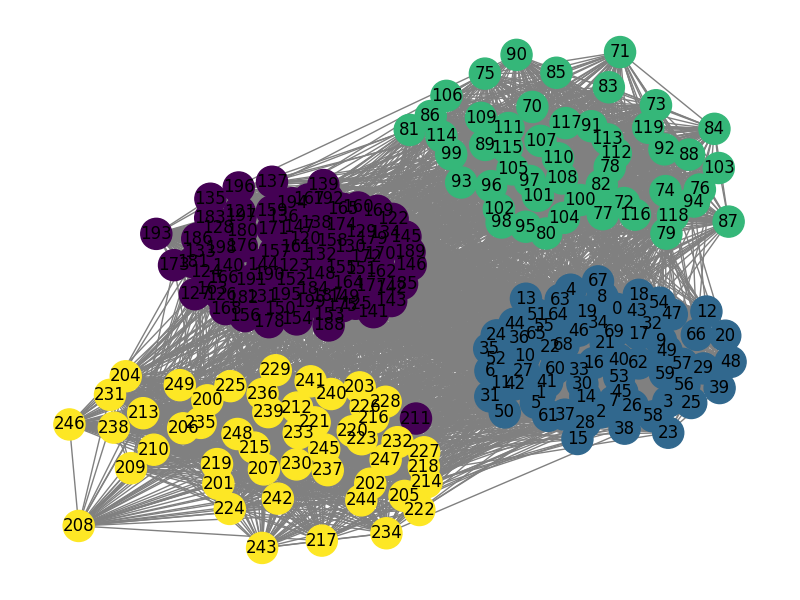}
    \caption{SBM network after community detection}
\end{figure}

For different mixing parameter value the ARI and NMI values will change. When the mixing parameter value is small like $0.1$ or $0.2$, then the ARI value or NMI value does not change much for different values of $pe$ and which is expected. But for the mixing parameter value greater than $0.4$, the ARI and NMI values decrease.\\
 \begin{figure}[H]
    \centering
    \includegraphics[width=0.60\textwidth]{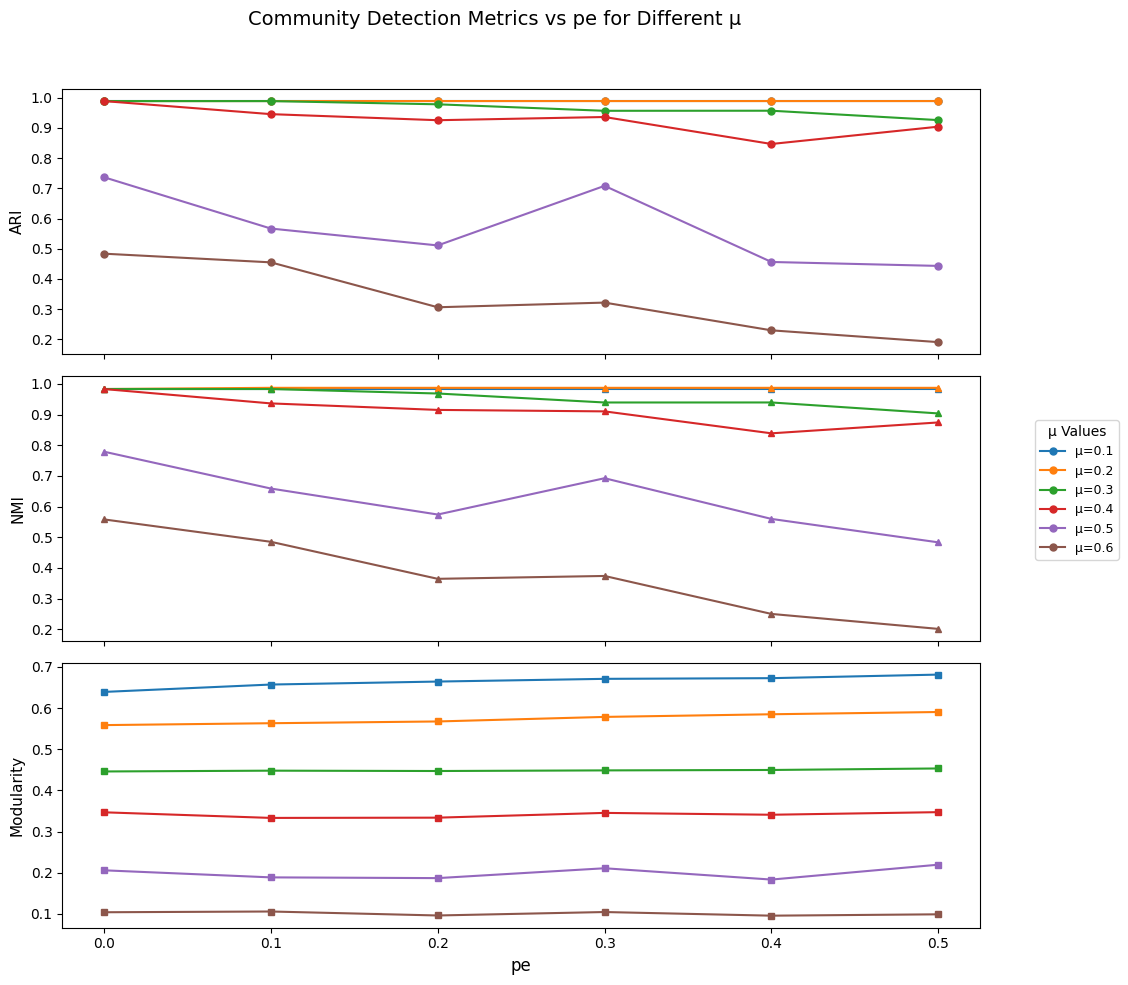}
    \caption{Different metric vs $pe$ for different $\mu$ on SBM}
\end{figure}
 \begin{figure}
    \centering
    \includegraphics[width=0.5\textwidth, trim=0 00 0 0, clip]{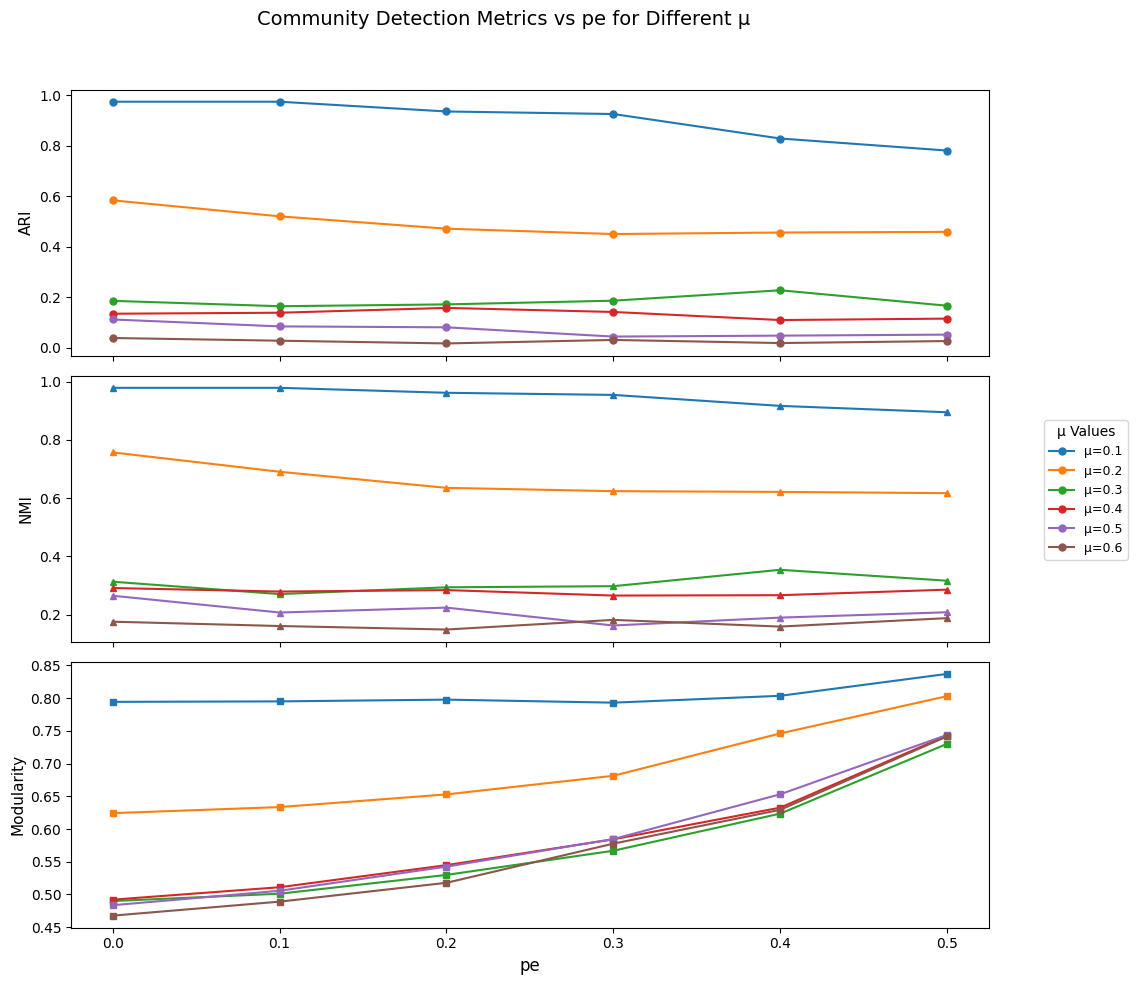}
    \caption{Different metric vs $pe$ for different $\mu$ on LFR}
\end{figure}
For LFR data we got the best NMI and ARI value for small $\mu=0.1$ and $pe=0,0.1$, but for SBM data we got the best NMI and ARI value for $\mu =0.1$ to $0.4$ and for all $pe$ values.\\
It appears that for the synthetic data, sparsification does not play much of a role because the networks are inherently generated with community structure; so the edges are not noisy, and they follow well-defined probabilities. This results in less redundancy to prune.\\
Next, we provide experimental results on different real-world datasets.\\
For the karate club data, we found that the best ARI value is $0.74$ and the NMI value is $0.71$ with modularity $0.42$ for the value $pe$ $0.4$\\
\begin{figure}[H]
    \centering
    \includegraphics[width=0.50\textwidth]{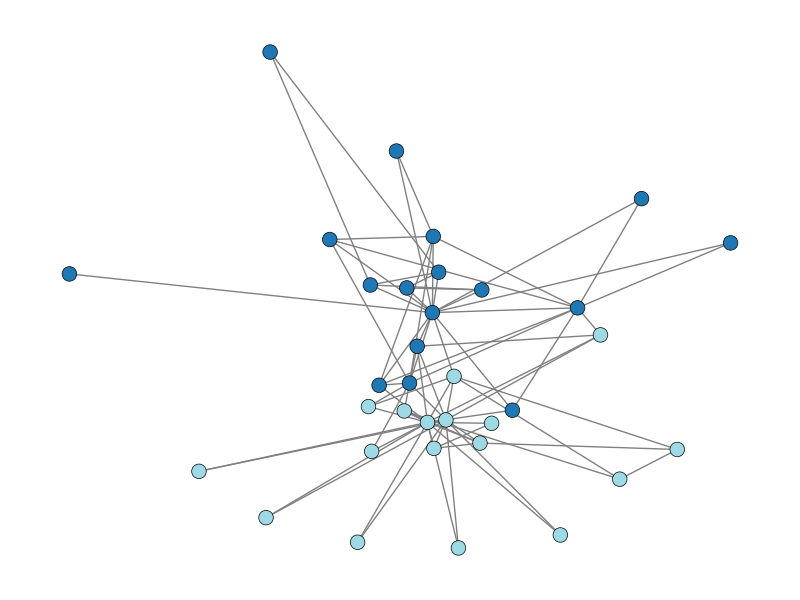}
    \caption{Karate club data}
\end{figure}
\begin{figure}[H]
    \centering
    \includegraphics[width=0.50\textwidth]{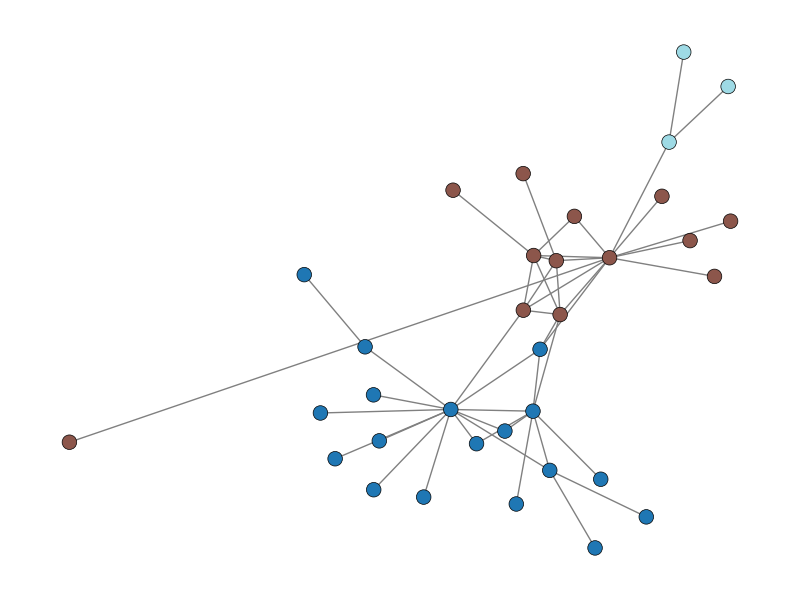}
    \caption{Karate club data after community detection}
\end{figure}
Though the data has $2$ community as ground truth but after algorithm runs, it finds the two community, and only $3$ data points are outside the community. Next we will show how the ARI, NMI and Modularity value changes for karate club data for different values of $pe$.\\
\begin{figure}[H]
    \centering
    \includegraphics[width=0.70\textwidth]{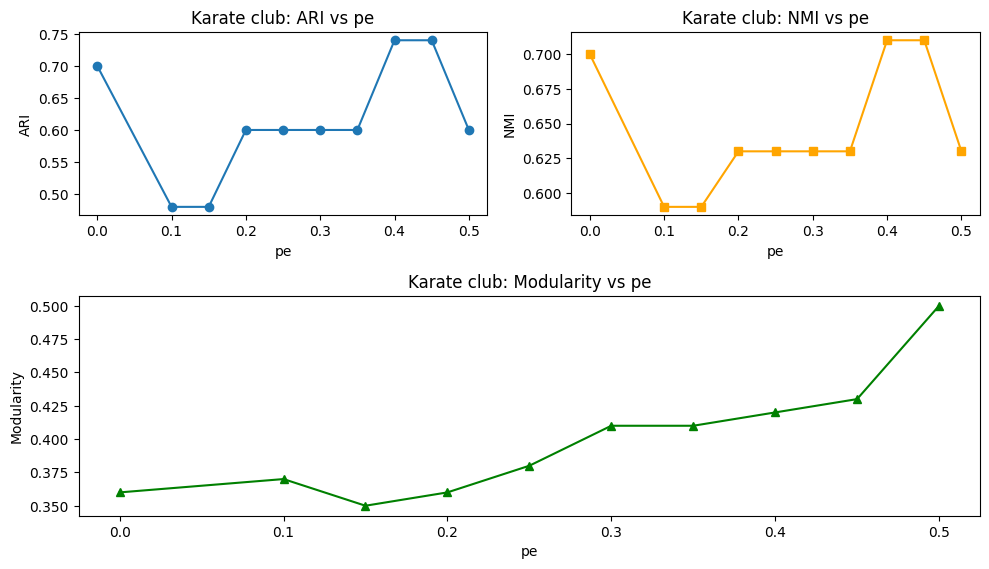}
    \caption{Changes of ARI,NMI and Modularity for different pe value}
\end{figure}
For American college football network, we got the best ARI $0.87$ and NMI $0.91$ with modularity $0.87$ for $pe$ value $0.35$.\\
 \begin{figure}[H]
    \centering
    \includegraphics[width=0.5\textwidth]{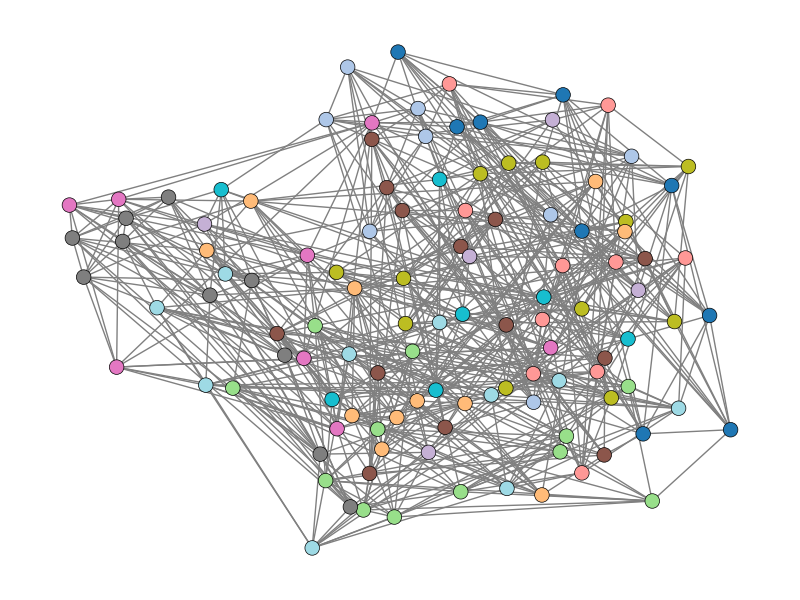}
    \caption{American college football data}
\end{figure}
 \begin{figure}[H]
    \centering
    \includegraphics[width=0.5\textwidth]{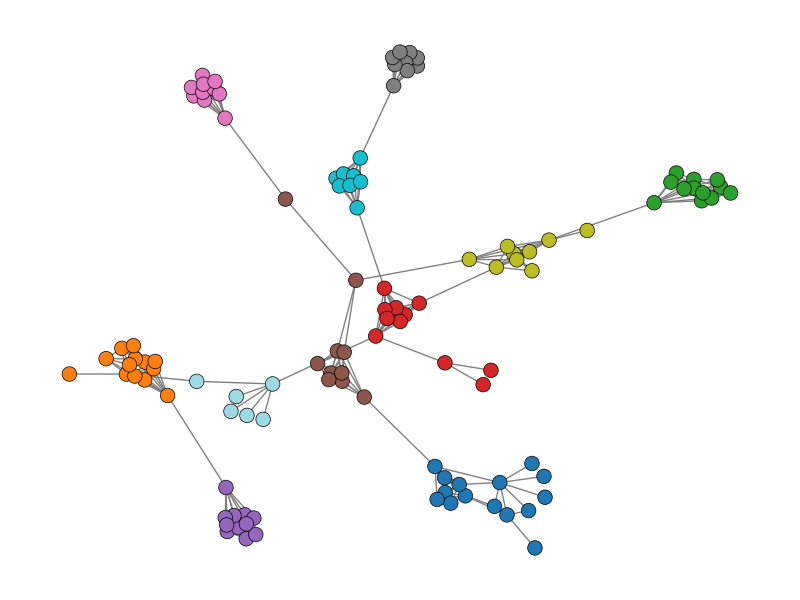}
    \caption{American College football data after community detection}
\end{figure}
The actual data has $12$ communities and it detects all the communities properly. Next plot shows how the ARI and NMI, and Modularity
value changes for different values of $pe$.\\
 \begin{figure}[H]
    \centering
    \includegraphics[width=0.50\textwidth]{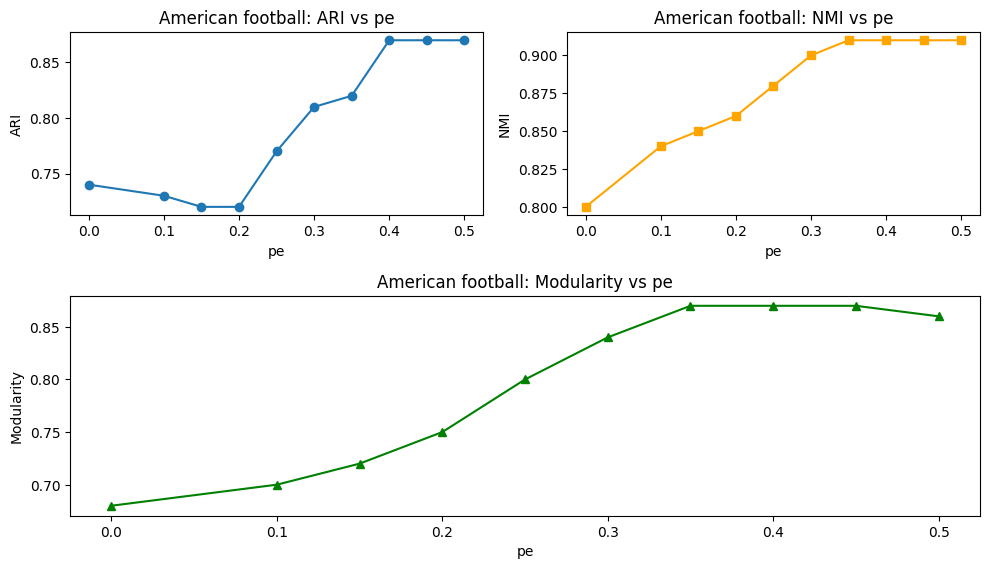}
    \caption{Changes of ARI,NMI and Modularity for different pe value}
\end{figure}
For political books data set, we got the optimal ARI $0.70$ and NMI $0.59$ with modularity $0.51$ for $pe$ value $0.2$. The data has $3$ ground truth communities and the algorithm detects all the communities properly.\\
 \begin{figure}[H]
    \centering
    \includegraphics[width=0.35\textwidth]{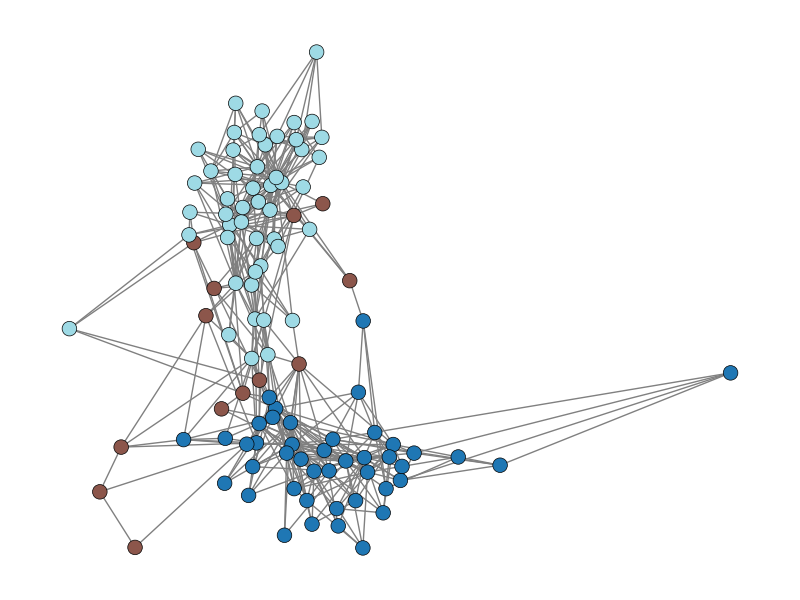}
    \caption{Political Books data}
\end{figure}
 \begin{figure}[H]
    \centering
    \includegraphics[width=0.35\textwidth]{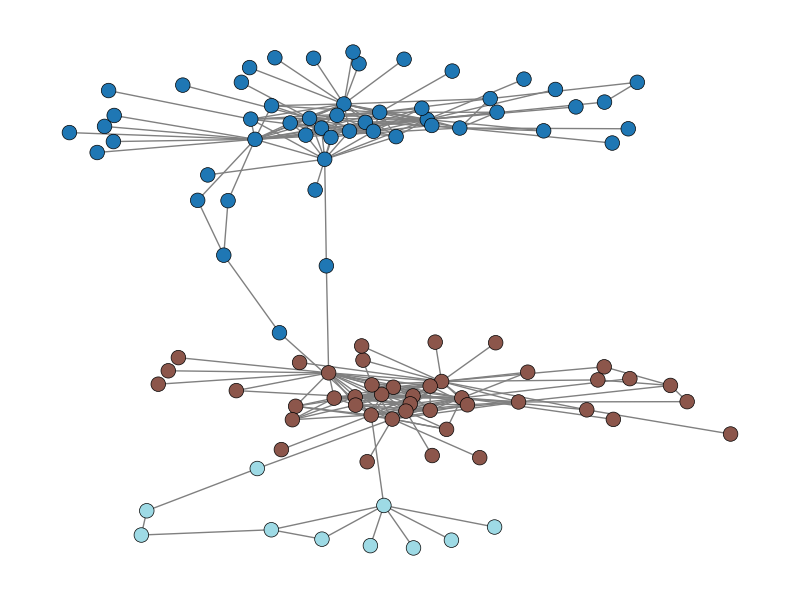}
    \caption{Political Books data after community detection}
\end{figure}
The next plot will demonstrate the changes of ARI ,NMI, and Modularity value for different $pe$ values.\\
 \begin{figure}[H]
    \centering
    \includegraphics[width=0.50\textwidth]{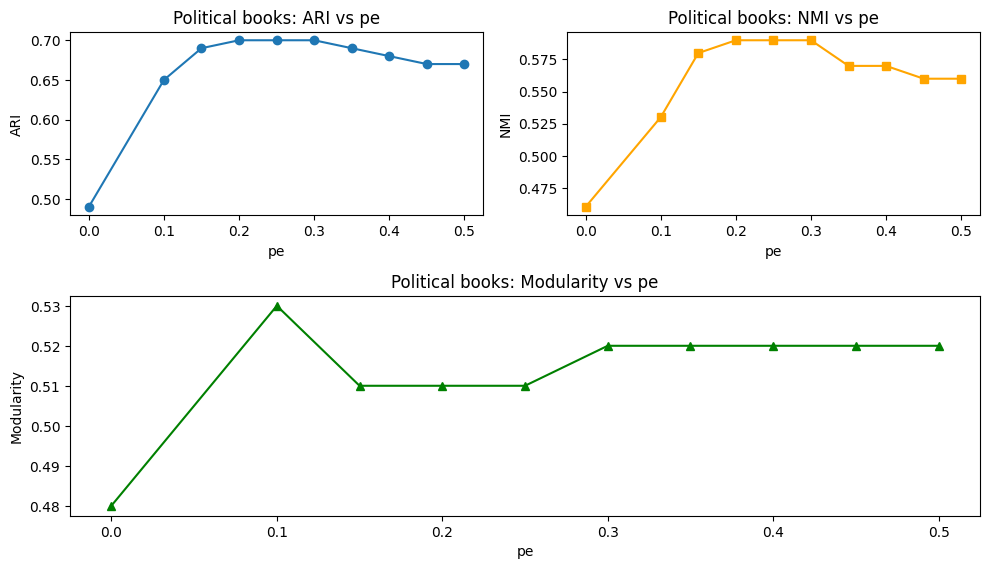}
    \caption{Changes of ARI,NMI and Modularity for different pe value}
\end{figure}
For political blog data, we got the best ARI $0.80$, NMI $0.67$, and the modularity value $0.42$ for the $pe$ value $0$, which means here we are not removing any edges. The data has two communities as ground truth, and it detects the communities. The plot shows how the ARI, NMI and modularity values are changing for different $pe$ values.\\
 \begin{figure}[H]
    \centering
    \includegraphics[width=0.50\textwidth]{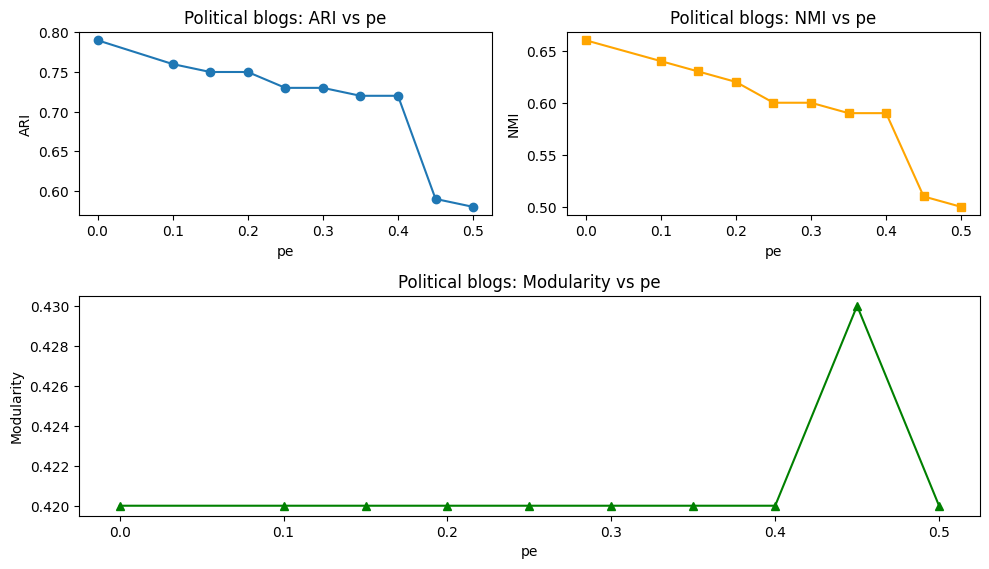}
    \caption{Changes of ARI,NMI and Modularity for different pe value}
\end{figure}
For Cora networkdata, we got the best ARI $0.31$, NMI $0.46$, and the modularity value $0.90$ for the $pe$ value $0$, which means here we are not removing any edges. The data has seven communities as ground truth, and it detects the communities with low accuracy. Here the ARI value is not that much good but the modularity value is high, which means the network has a strong community structure.  The plot shows how the ARI, NMI and modularity values are changing for different $pe$ values.\\
 \begin{figure}[H]
    \centering
    \includegraphics[width=0.50\textwidth, trim=0 150 0 0, clip]{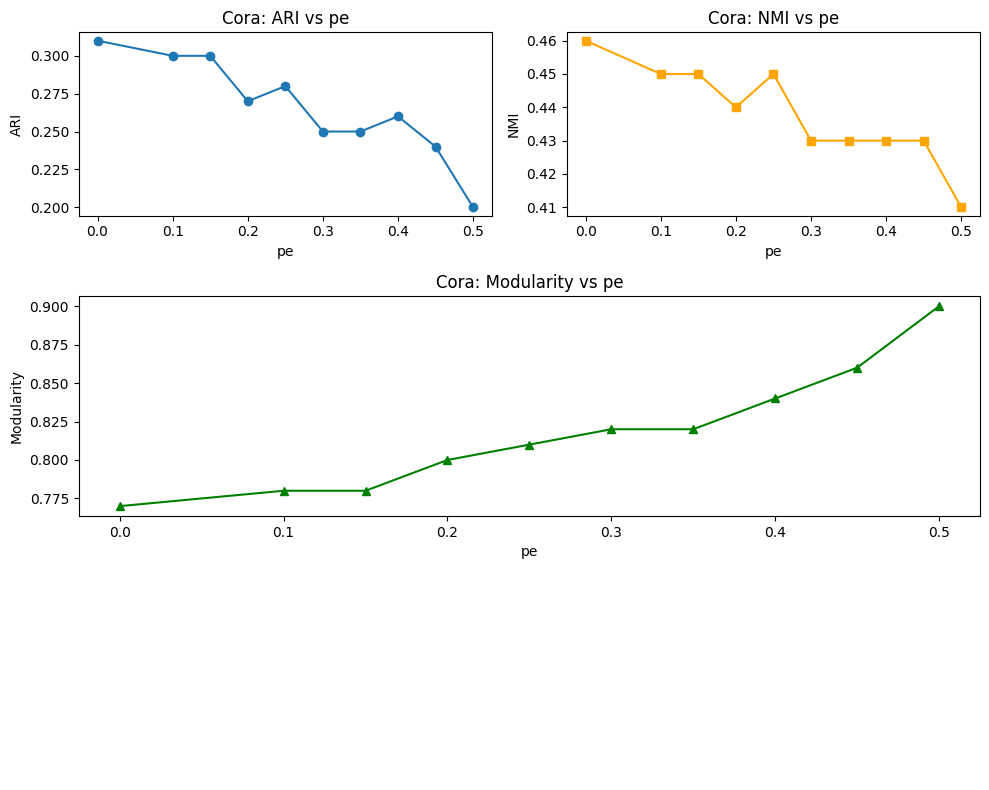}
    \caption{Changes of ARI,NMI and Modularity for different pe value}
\end{figure}
\section{Evaluation Against Existing Methods}
We are comparing our method(ERSCD) with three benchmark methods. These are Louvain\cite{blondel2008fast}, Infomap\cite{rosvall2008maps} and Ricci flow\cite{ni2019community}.\\
\paragraph{\textbf{Louvain:}}
The Louvain method is a very efficient community detection algorithm widely used on large networks. It follows the optimization of modularity, which is a quality function assessing the density of links within communities relative to links between communities. The approach proceeds in a greedy two-step manner: it first assigns each node to its own community, then moves nodes between communities in order to maximize modularity gain locally. In the second stage, communities are merged to constitute a new network, and iteration is carried out until no modularity increase can be achieved any further. 
\paragraph{\textbf{Infomap:}}
The Infomap algorithm is based on information theory and treats community detection as a problem of finding the simplest way to describe the flow of information on a network. The idea is to minimize how much information is needed to describe the movement of a random walker through the graph. Infomap works by grouping nodes that are frequently visited by the walker, helping to find communities within the network. 
\paragraph{\textbf{Ricci Flow:}}
The Ricci Flow method is inspired by the concept of Ricci curvature from geometry and is used for community detection in networks. It works by treating the network as a space that evolves over time, with the goal of smoothing out uneven connections between nodes. The method adjusts the "curvature" of the network, aiming group nodes into communities where the connections within each group are stronger than those between groups. The Ricci flow process helps identify natural clusters in the network by progressively balancing the graph structure.\\

We compared all the algorithms on both synthetic and real world network. For synthetic data we have used SBM and LFR with $250$ nodes and different mixing parameters. The measurements for Louvain and Infomap are taken as the avarage of 10 simulation for synthetic network. \\
 \begin{figure}[H]
    \centering
    \includegraphics[width=0.42\textwidth]{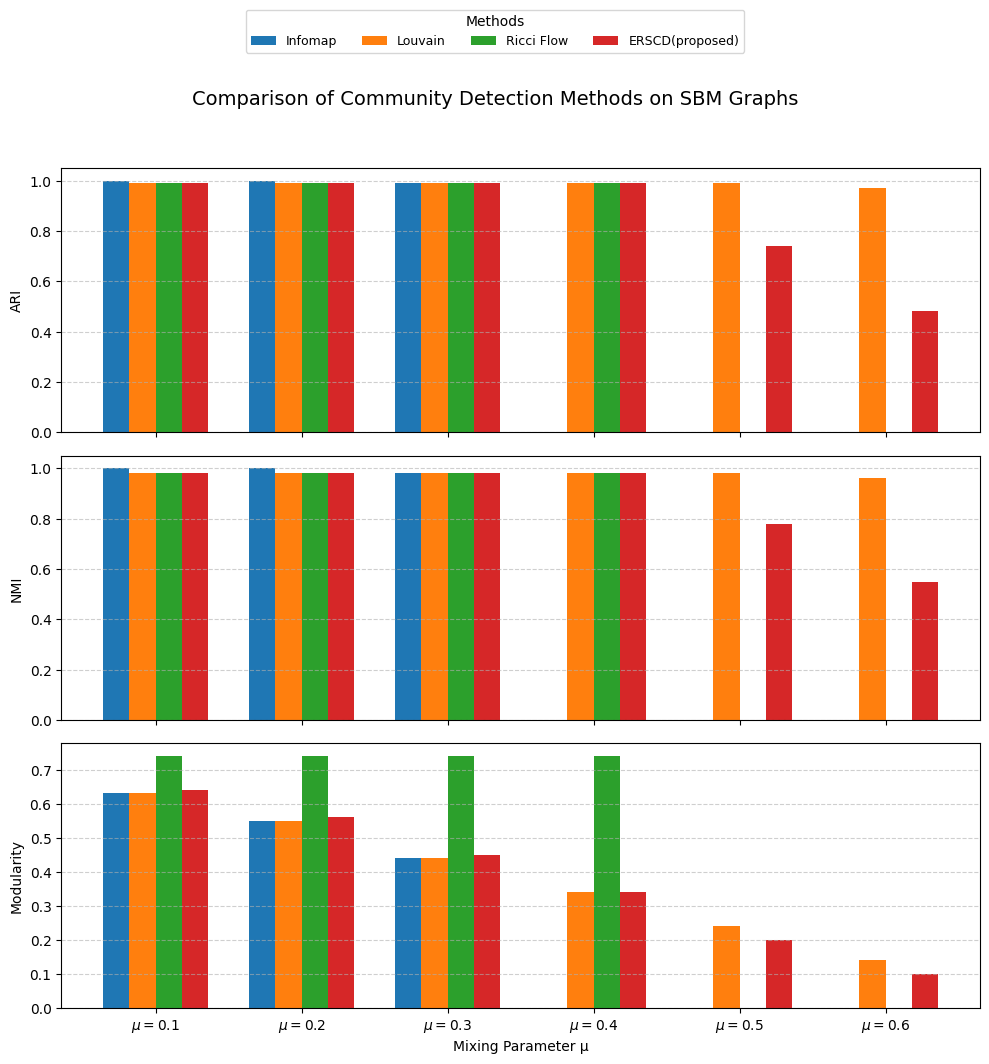}
    \caption{comparison of different methods for differen $\mu$ values}
\end{figure}
 \begin{figure}[H]
    \centering
    \includegraphics[width=0.42\textwidth]{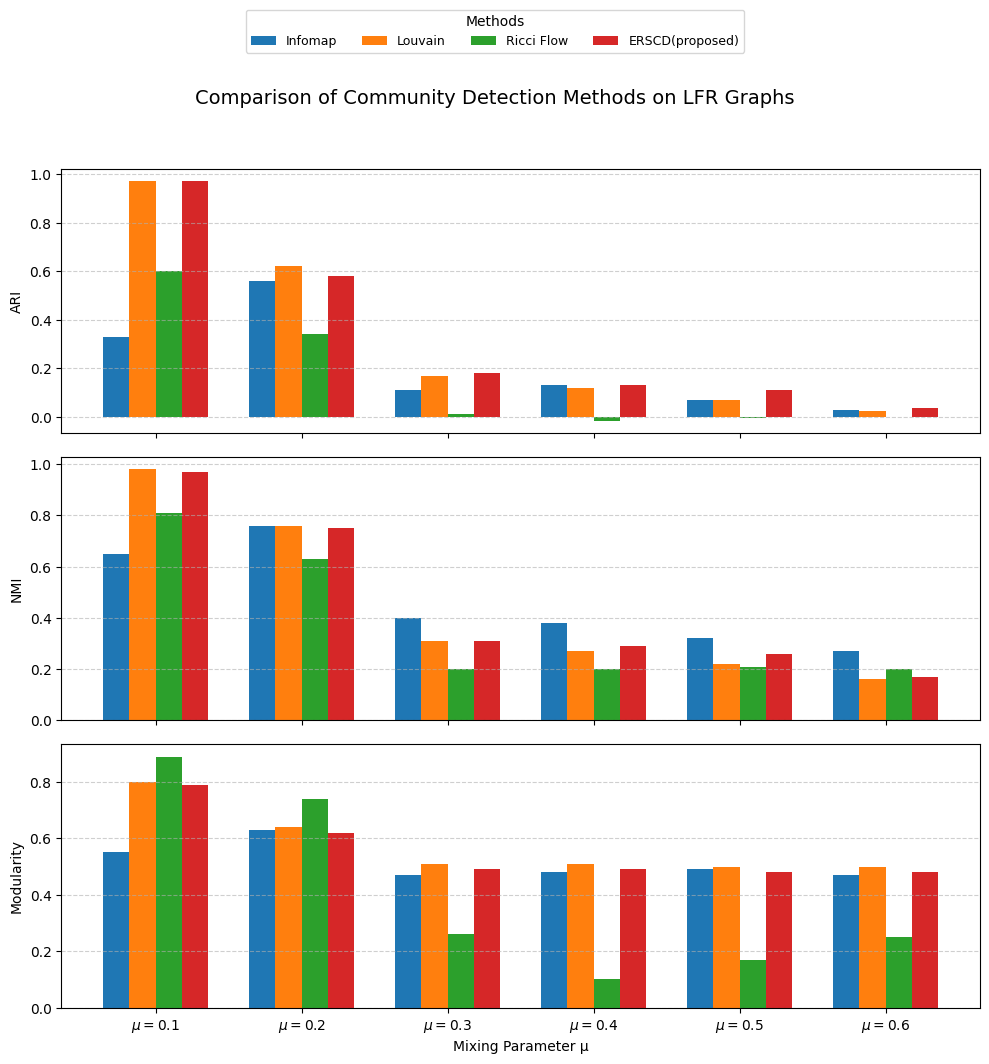}
    \caption{comparison of different methods for differen $\mu$ values}
\end{figure}
It is seen that in case of SBM data, for less $\mu$ value all methods perform outstandingly, but for $\mu$ greater than $0.4$, only two methods are working, one is our method, other is the Louvain method. In paper\cite{ni2019community} also they mentioned that the Ricci flow method will work for $\mu \leq 0.4$. For LFR data also same thing happens but here no method can give better ARI and NMI value for $\mu=0.2$. In some cases the Ricci flow method gives negative ARI values too.\\
For different real-world network data, we compute the ARI value of these three methods, including the proposed method ERSCD, it is seen that ERSCD \textbf{outperforms} all the other methods in most of the cases.\\
 \begin{figure}[H]
    \centering
    \includegraphics[width=0.40\textwidth]{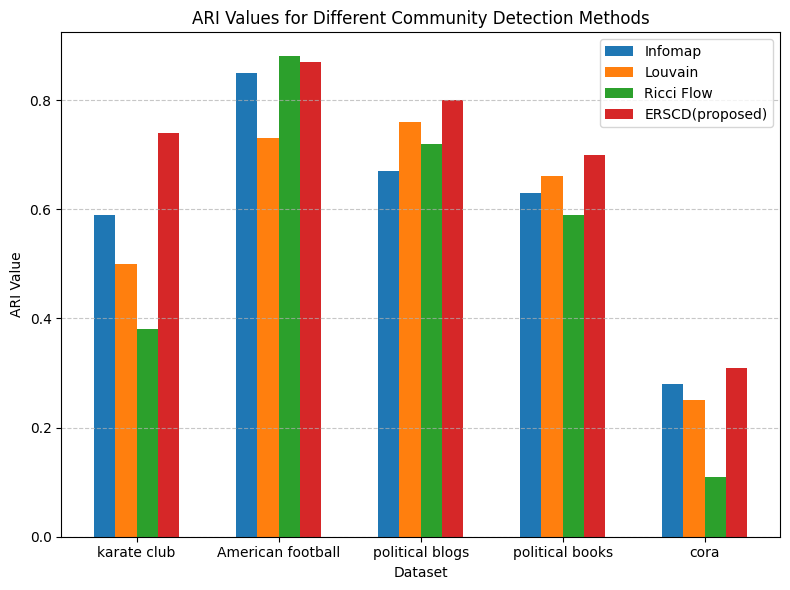}
    \caption{ARI values of different methods for different datasets}
\end{figure}
We have seen the same trends for the NMI values. Here, also in most cases ERSCD \textbf{performs} better.\\
 \begin{figure}[H]
 \vspace{-1mm}
    \centering
    \includegraphics[width=0.40\textwidth]{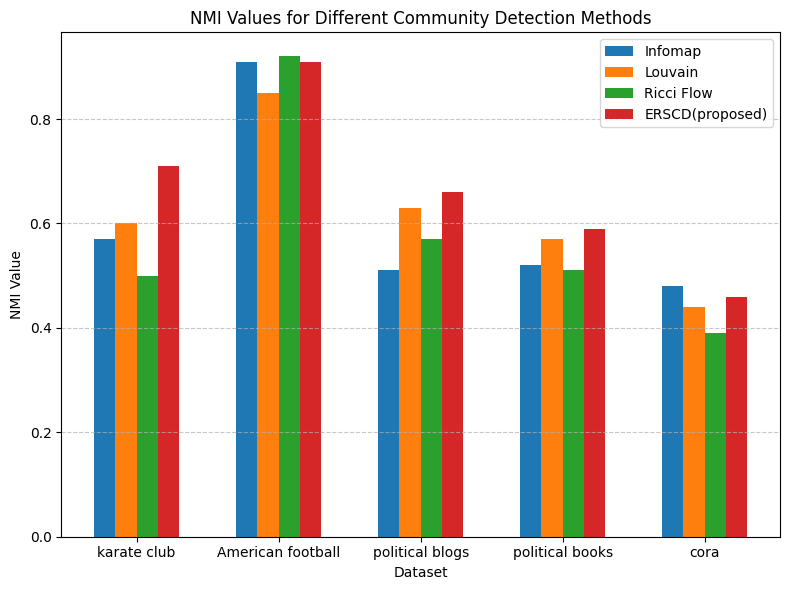}
    \caption{NMI values of different methods for different datasets}
\end{figure}
In the case of modularity value, we have seen that the Ricci flow method's performance is better than the others.\\
 \begin{figure}[H]
    \centering
    \vspace{-5mm}
    \includegraphics[width=0.40\textwidth]{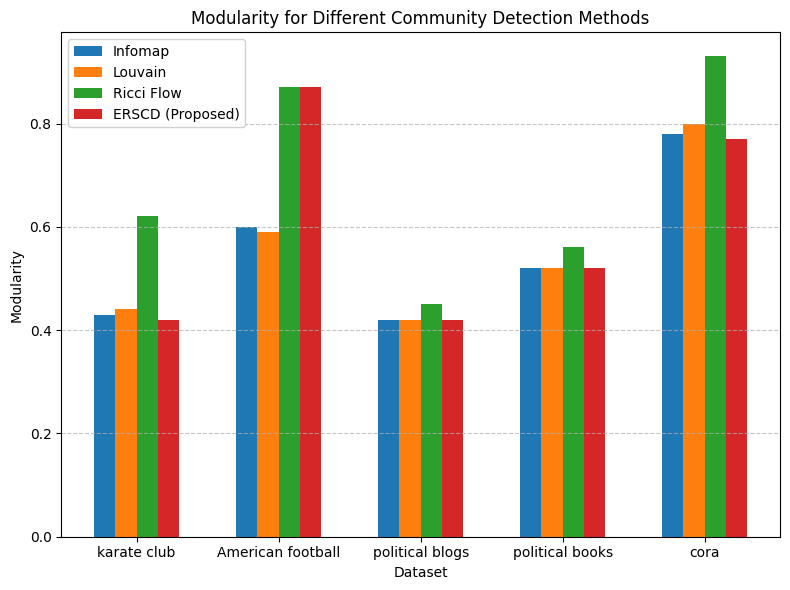}
    \caption{Modularity values of different methods for different datasets}
    \vspace{-5mm}
\end{figure}
We have compared the execution time of different methods on different data sets. Result shows that Ricci flow method is taking the \textbf{highest time}, and \textbf{least time} is taken by Infomap and Louvain method, and our method's execution time lies in \textbf{between}. It means our algorithm takes \textbf{reasonable} time and gives \textbf{better} ARI value than the other methods.\\ The plot below is done in ${log}$ scale.\\
 \begin{figure}[H]
    \centering
    \includegraphics[width=0.40
    \textwidth]{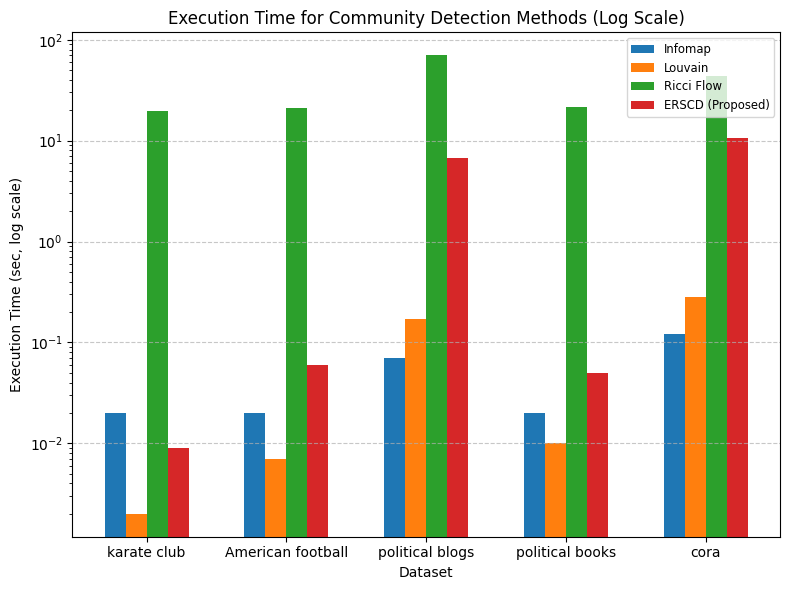}
    \caption{Execution time of different methods for different datasets}
\end{figure}

\section{Conclusion and Discussion}
We have introduced ERSCD, a new approach for detecting communities in an undirected connected graph. In this method, we model the graph as an electrical circuit and compute the effective resistance between each pair of connected nodes. This resistance-based distance is then used to reweight the graph and delete a percentage of edges from non-MST edges. It guides the search for communities that maximize weighted modularity.\\
Through extensive testing on both synthetic and real-world datasets, our result shows that in most cases ERSCD \textbf{consistently performs better} in terms of accuracy and also takes \textbf{reasonably less computing time} than the other existing methods .\\
Another advantage of our algorithm is that it \textbf{does not  involves  many parameters} like other methods. Only one parameter of our algorithm is $pe$, which decides how many edges to remove from the network. By checking the different $pe$ value experimentally from $0$ percent to $50$ percent, we got a good result. However, further work is needed to define a more general rule for selecting the $pe$ value.\\
We haven't used GPU acceleration yet, but doing so could greatly speed up the runtime.
\section{Funding}
Research of JP is supported by JRF from Department of Science and Technology(DST), Government of India, Research of PB is funded by JRF from the Ministry of Human Resource Development (MHRD), Government of India.

\bibliographystyle{siamplain}
\bibliography{bibliography}

\end{document}